\newtheorem{theorem}{Theorem}
\newtheorem{lemma}{Lemma}
\newtheorem{fact}{Fact}
\DeclareMathOperator{\Tr}{tr}
\DeclareMathOperator{\E}{\mathbb{E}}
\newcommand{\norm}[1]{\left\lVert#1\right\rVert}
\algrenewcommand\alglinenumber[1]{\sf\scriptsize\color{blue}{#1}}
\algrenewcommand\algorithmicrequire{\textbf{Input:}}
\algrenewcommand\algorithmicensure{\textbf{Output:}}
\begin{document}

\title{Predicting Features of Quantum Systems from Very Few Measurements}
\date{\today}
\author{Hsin-Yuan Huang}
\email{hsinyuan@caltech.edu}
\affiliation{Institute for Quantum Information and Matter, California Institute of Technology, USA}
\affiliation{Department of Computing and Mathematical Sciences, California Institute of Technology, USA}
\author{Richard Kueng}
\affiliation{Institute for Quantum Information and Matter, California Institute of Technology, USA}
\affiliation{Department of Computing and Mathematical Sciences, California Institute of Technology, USA}

\begin{abstract}
Predicting features of complex, large-scale quantum systems is essential to the characterization and engineering of quantum architectures. 
We present an efficient approach for constructing an approximate classical description, called the \emph{classical shadow}, of a quantum system from very few quantum measurements that can later be used to predict a large collection of features. This approach is guaranteed to accurately predict $M$ linear functions with bounded Hilbert-Schmidt norm from only order of $\log (M)$ measurements. This is completely independent of the system size and saturates fundamental lower bounds from information theory.
We support our theoretical findings with numerical experiments over a wide range of problem sizes (2 to 162 qubits). These highlight advantages compared to existing machine learning approaches. 
\end{abstract}

\maketitle

\section{Introduction and Main Results}

\subsection{Motivation}

Making predictions based on empirical observations is a central subject studied in statistical learning theory and is at the heart of many scientific disciplines, including quantum mechanics.
There, predictive tasks, like estimating target fidelities, or verifying entanglement, are essential ingredients for manufacturing, calibrating and controlling quantum architectures. 
Recently, unprecedented advances in the size of such architectures \cite{preskill2018nisq} has pushed traditional prediction techniques -- like quantum state tomography -- to the limit of their capabilities. This is mainly due to a curse of dimensionality: the number of parameters that describe a quantum system scales exponentially in the number of constituents. Moreover, these parameters cannot be accessed directly, but must be estimated by measuring the system. An informative quantum mechanical measurement is both destructive (wave-function collapse) and only yields probabilistic outcomes (Born's rule). Hence, many samples are required to accurately estimate even a single parameter of the underlying quantum system. Furthermore, all of these outcomes must be processed and stored in memory for subsequent prediction of relevant features.
In summary: reconstructing a full description of a quantum system with $n$ constituents (e.g. qubits) necessitates an exponential number of measurement repetitions, and an exponential amount of classical memory and computing power. 

Several approaches have been proposed to overcome this fundamental scaling problem. These include matrix product state (MPS) tomography \cite{cramer2010efficient} and neural network tomography \cite{torlai2018neural, carrasquilla2019reconstructing}. Both only require a polynomial number of samples, provided that the underlying state have advantageous structure. However, for general quantum systems, these techniques still require an exponential number of samples.
We refer to the related work section for details.

Pioneering a conceptually very different line of research, Aaronson et al. \cite{aaronson2018shadow, aaronson2019gentle} pointed out that demanding full classical descriptions of quantum systems may be excessive for many concrete tasks. Instead it is often sufficient to accurately predict certain features. In quantum mechanics, interesting features are typically linear functions in the underlying density matrix $\rho$:
\begin{equation}
o_i (\rho) = \mathrm{trace}(O_i \rho) \quad 1 \leq i \leq M. \label{eq:linear_predictions}
\end{equation}
The fidelity with a pure target state, entanglement witnesses, potential future measurement statistics and expectation values of observables are but a few prominent examples.
Aaronson coined the term
\cite{aaronson2018shadow} \emph{shadow tomography}\footnote{According to Ref.~\cite{aaronson2018shadow} it was actually S.T.\~Flammia who originally suggested the name shadow tomography.} for this concrete task and showed that already a polynomial number of state copies suffice to predict an exponential number of target functions. While very efficient in terms of samples, this procedure is very demanding in terms of quantum hardware: a concrete implementation of the existing protocol requires exponentially long quantum circuits that act simultaneously on all copies of the unknown state stored in a quantum memory.

\begin{figure*}[t]
    \centering
    \includegraphics[width=1.0\textwidth]{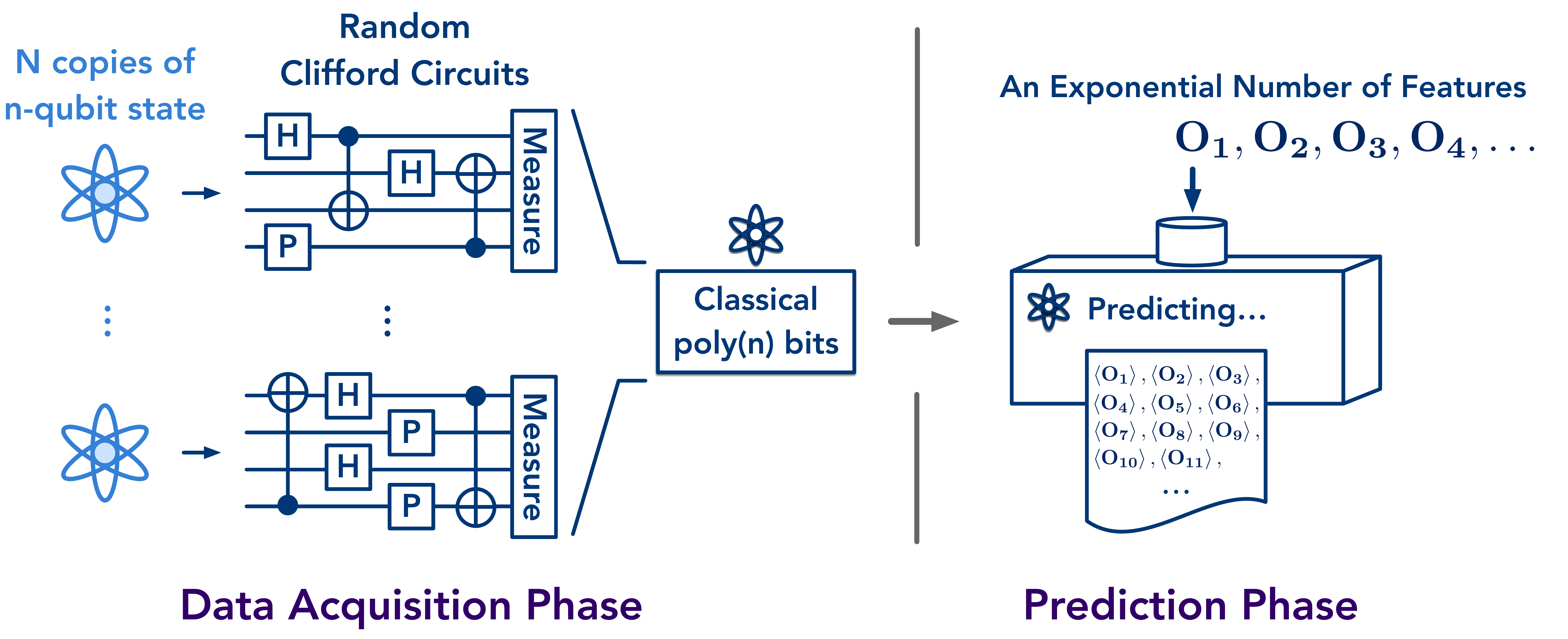}
    \caption{\emph{Caricature of classical shadows:} In the data acquisition phase, we perform random Clifford measurements on independent copies of a $n$-qubit system to obtain a classical representation of the quantum system -- the \emph{classical shadow}. Such classical shadows facilitate accurate prediction of a large number of features using a simple median-of-means protocol. }
    \label{fig:setup}
\end{figure*}

In this work, we combine the mindset of shadow tomography \cite{aaronson2018shadow} (predict linear target functions, not the full state)  with recent insights from quantum state tomography \cite{guta2018fast} (rigorous statistical convergence guarantees)  and the  stabilizer formalism \cite{gottesman1997stabilizer} (efficient implementation).
The result is a highly efficient protocol that first learns a minimal classical sketch $S_\rho$ -- the \emph{classical shadow} -- of an unknown quantum state $\rho$. This sketch can be stored efficiently and is based on independent, tractable quantum measurements. 
Subsequently $S_\rho$ can be used to predict arbitrary linear function values \eqref{eq:linear_predictions} by a simple median-of-means protocol. We refer to Figure~\ref{fig:setup} for an illustration.

Our main technical contribution equips this procedure with rigorous performance guarantees. A classical shadow of polynomial size provably suffices to accurately predict an exponential number of target functions. More precisely, we only need a classical shadow of size $B \log(M) / \epsilon^2$ to predict $M$ target functions, provided that $B \geq \max_i \Tr(O_i^2)$. We emphasize that this rigorous result establishes an exponential compression both in terms of dimensionality and the number of target functions.
The scaling with respect to~$B$ can be restrictive in certain cases.
However, we show that this is not a flaw of the method, but an unavoidable limitation rooted in information theory.
By relating feature prediction to a communication task whose potential is limited by information theory~\cite{fano1961information_theory},
we establish a fundamental lower bound of $B \log(M) / \epsilon^2$, which matches with the performance of classical shadows.

In contrast to Aaronson's prior results, the quantum and classical aspects of this procedure are manifestly tractable -- both in theory and practice.
We support this claim by conducting numerical simulations for quantum systems comprised of up to 162 qubits. 
We observe that prediction via classical shadows scales favourably and improves upon existing techniques (neural network tomography)
in a variety of practically-motivated test cases.

\subsection{The procedure}

\paragraph{Data acquisition phase:}

Throughout this work we restrict attention to multi-qubit systems and $\rho$ is a fixed, but unknown, quantum state in $D=2^n$ dimensions.
To extract meaningful information, we repeatedly perform a simple measurement procedure: randomly rotate the state ($\rho \mapsto U \rho U^\dagger$) and perform a computational basis measurement. 
Upon receiving outcome $\hat{b} \in \left\{0,1 \right\}^n$, we store a classical description of $U^\dagger |\hat{b} \rangle $ in the classical memory. Repeating this procedure $N$ times results in an array of $N$ rotated basis states:
\begin{equation}
\mathsf{S}(\rho;N) = \left\{ U_1^\dagger | \hat{b}_1 \rangle,\cdots U_N^\dagger | \hat{b}_N \rangle \right\}. \label{eq:classical_shadow}
\end{equation}
We call this array the \emph{classical shadow} of $\rho$.
To minimize storage cost, we choose each $U_i$ uniformly from the Clifford group \cite{gottesman1997stabilizer}. This group is generated by CNOT, Hadamard and Phase gates and has an extensively rich and well-understood structure. In particular, each $U_i | \hat{b}_i \rangle$ is a stabilizer state that is fully characterized by only $\mathcal{O}(n^2)$ bits \cite{aaronson2004improved}. What is more, Cifford measurements are known to be practically feasible in a large variety of quantum architectures.

\paragraph{Prediction phase:}

We take guidance from quantum state tomography \cite{sugiyama2013tomography,guta2018fast}. 
The classical shadow \eqref{eq:classical_shadow} may be interpreted as outcome/frequency statistics associated with a single quantum measurement (the POVM of all stabilizer states). The associated linear inversion estimator is $\hat{\rho} = (2^n+1) / N \sum_{i=1}^N U_i^\dagger |\hat{b}_i \rangle \! \langle \hat{b}_i| U_i - \mathbb{I}$.
Rather than forming a single estimator, we split the classical shadow up into  equally-sized chunks and construct several, independent linear inversion estimators.
Subsequently, we predict linear function values \eqref{eq:linear_predictions} 
via \emph{median of means estimation} \cite{jerrum1986medianmeans,nemirovski1983medianmeans}, see also \cite{gosset2018compressed}. This procedure is summarized in Algorithm~\ref{alg:median_means}.

\begin{algorithm*}[t]
{\small
\begin{algorithmic}[1]
\caption{{\small \textit{Median of means prediction}  based on a classical shadow $\mathsf{S}(\rho,N)$.}}
\label{alg:median_means}

\Statex
\Function{LinearPredictions}{$O_1,\ldots,O_M,\mathsf{S}(\rho;N),K$}

\State Import $\mathsf{S}(\rho;N) = \left[ U_1^\dagger | \hat{b}_1 \rangle,\ldots, U_N^\dagger | \hat{b}_N \rangle \right]$
\Comment Load classical shadow
\State Split the shadow into $K$ equally-sized parts and set
\Comment Construct $K$ linear estimators of $\rho$
\begin{displaymath}
\hat{\rho}_{(k)} = \frac{2^n+1}{\lfloor N/K \rfloor} \sum_{i = (k-1) \lfloor N/K \rfloor+1}^{k\lfloor N/K \rfloor} U_i^\dagger | \hat{b}_i \rangle \! \langle \hat{b}_i | U_i - \mathbb{I} \quad \textrm{for} \quad 1 \leq k \leq K.
\end{displaymath}

\For{$i=1$ to $M$}

\State Output
$
\hat{o}_i (N,K) = \mathrm{median} \left\{ \mathrm{tr} \left(O_i \hat{\rho}_{(1)} \right),\ldots, \mathrm{tr} \left( O_i \hat{\rho}_{(K)} \right) \right\}.
$
\Comment Median of means estimation
\EndFor
\EndFunction
\end{algorithmic}
}
\end{algorithm*}

\subsection{Rigorous performance guarantees}
\label{sec:main}


\begin{theorem}[informal version] \label{thm:main}
Classical shadows of size (order) $\log (M) \max_i \mathrm{tr}(O_i^2)/\epsilon^2$  suffice to predict $M$ linear target functions $\mathrm{tr}(O_i \rho)$ up to accuracy $\epsilon$.
\end{theorem}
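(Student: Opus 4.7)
The plan is to decouple the analysis into (i) a statistical analysis of each single-shot linear inversion estimator $\hat{\rho}$, and (ii) a standard median-of-means concentration argument that amplifies the per-observable success probability and then union-bounds over the $M$ target observables. For (i), it suffices to establish unbiasedness $\mathbb{E}[\hat{\rho}] = \rho$ together with a variance bound of the form $\mathrm{Var}[\mathrm{tr}(O\hat{\rho})] = O(\mathrm{tr}(O^2))$ that is uniform in the unknown state. For (ii), splitting $N$ shadows into $K$ equally sized chunks, averaging within each chunk, and taking the median amplifies a constant-probability guarantee from Chebyshev's inequality into an exponentially small failure probability in $K$.

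Unbiasedness follows from a channel inversion argument. The rotate-and-measure map $\rho \mapsto U^\dagger|\hat{b}\rangle\langle\hat{b}|U$, with $U$ a random Clifford and $\hat{b}$ distributed according to Born's rule, implements in expectation a quantum channel $\mathcal{M}(X) = \mathbb{E}_U \sum_b \langle b|UXU^\dagger|b\rangle\, U^\dagger|b\rangle\langle b|U$. Because the Clifford group is a unitary 2-design (in fact a 3-design), standard twirling identities evaluate $\mathcal{M}$ in closed form as the depolarizing-type map $\mathcal{M}(X) = (X + \mathrm{tr}(X)\mathbb{I})/(2^n+1)$, whose inverse is $\mathcal{M}^{-1}(X) = (2^n+1)X - \mathrm{tr}(X)\mathbb{I}$. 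Comparison with Algorithm~\ref{alg:median_means} shows $\hat{\rho} = \mathcal{M}^{-1}(U^\dagger|\hat{b}\rangle\langle\hat{b}|U)$, hence $\mathbb{E}[\hat{\rho}] = \mathcal{M}^{-1}(\mathcal{M}(\rho)) = \rho$.

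The technically demanding step is the variance bound. The second moment $\mathbb{E}[\mathrm{tr}(O\hat{\rho})^2]$ expands into the average of a degree-three polynomial in $U$ and $U^\dagger$ against the joint distribution of $U$ and $\hat{b}$. Invoking the 3-design property of the Clifford group, this Clifford average equals the corresponding Haar average, which can be reduced via a third-order moment computation to a linear combination of elementary scalars such as $\mathrm{tr}(O^2)$, $(\mathrm{tr}\, O)^2$ and $\mathrm{tr}(\rho O^2)$. Careful bookkeeping, together with the observation that only the traceless component of $O$ contributes to the variance, should yield the claimed state-independent bound. I expect this moment calculation to be the main obstacle: the third Clifford moment is the weakest design hypothesis one can exploit, and the dependence on $\rho$ must be controlled uniformly over all density matrices. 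Once the variance bound is in hand with $\sigma^2 = O(\max_i \mathrm{tr}(O_i^2))$, Chebyshev ensures each chunk mean lies within $\epsilon$ of its target with constant probability when the chunk size is of order $\sigma^2/\epsilon^2$; a Chernoff bound on the median of $K$ chunks drives the per-observable failure probability below $\delta$ for $K$ of order $\log(1/\delta)$; taking $\delta$ of order $1/M$ and union-bounding over the $M$ observables then produces the total shadow size $N$ of order $\log(M)\max_i \mathrm{tr}(O_i^2)/\epsilon^2$ announced in Theorem~\ref{thm:main}.
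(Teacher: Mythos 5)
Your proposal follows essentially the same route as the paper: unbiasedness from the 2-design property (your channel-inversion phrasing of $\mathcal{M}^{-1}(X)=(2^n+1)X-\mathrm{tr}(X)\mathbb{I}$ is just a repackaging of the paper's direct moment computation), a variance bound $\mathrm{Var}[\mathrm{tr}(O\hat{\rho})]\leq 3\,\mathrm{tr}(O^2)$ obtained from the 3-design identity applied to the traceless part of $O$ together with $\mathrm{tr}(\rho O_0^2)\leq\mathrm{tr}(O^2)$, and then Chebyshev plus a Chernoff bound on the median of $K$ chunk means followed by a union bound over the $M$ observables. The only part you leave implicit --- the explicit third-moment bookkeeping --- is exactly what the paper's Eq.~\eqref{eq:3design} supplies, and it goes through as you anticipate.
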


We refer to Appendix~\ref{sec:thm1proof} for a detailed statement and proofs. 
There we also report substantial improvements in the case of very many predictions, i.e.\ $M >2^{2^n}$.
Theorem~\ref{thm:main} is most powerful, when the linear functions have constant Hilbert-Schmidt norm. In this case, classical shadows allow for predicting an exponential number of linear functions based on a polynomial number of quantum measurements only. Moreover, this sampling rate is completely independent of the system dimension.

\subsection{Illustrative example applications}

\paragraph{Direct fidelity estimation:}

Suppose that we wish to certify that an experimental device prepares a desired $n$-qubit state. Typically, this target state $| \psi \rangle \! \langle \psi|$ is pure and highly structured, e.g.\ a a GHZ state \cite{greenberger1989GHZ} for quantum communication protocols, or a 
toric code ground state \cite{kitaev2002toric} for fault-tolerant quantum computation.
Theorem~\ref{thm:main} asserts that a classical shadow of dimension-independent size suffices to accurately predict the fidelity of \emph{any} state in the lab with \emph{any} pure target state. This base case improves over the best existing result on direct fidelity estimation \cite{flammia2011direct} which requires $O(2^n / \epsilon^4)$ samples in the worst case. Moreover, a classical shadow of polynomial size allows for estimating an exponential number of (pure) target fidelities all at once.

\paragraph{Entanglement verification:} 

Fidelities with pure target states can also serve as (bipartite) \emph{entanglement witnesses} \cite{guehne2009entanglement}.
Separable $n$-qubit states $\rho$ necessarily obey $\langle \psi| \rho | \psi \rangle \leq 2^{-n / 2}$ for any maximally entangled state $| \psi \rangle \! \langle \psi|$.
In turn, entanglement verification may be achieved by finding a single maximally entangled state with $\langle \psi| \rho | \psi \rangle >2^{-n /2}$. Classical shadows of logarithmic size allow for checking an exponential number of such entanglement witnesses simultaneously. What is more, the actual (Clifford) measurement procedure is completely independent of the witnesses in question. There is no need to locally decompose \cite{guehne2009entanglement} a concrete witness in order to accurately estimate it directly. These ideas readily extend to genuine multipartite entanglement detection.

\paragraph{Predicting expectation values of certain observables (non-example):}

The applicability of Theorem~\ref{thm:main} is not without limitations. The size of classical shadows must scale with $\mathrm{tr}(O_i^2)$ -- the (maximal) Hilbert-Schmidt norm among target functions. This scaling is benign for observables that are low rank, but can become exponentially large for certain observables.
A concrete example is Pauli expectation value of a spin chain: $\langle \sigma_{i_1} \otimes \cdots \otimes \sigma_{i_n} \rangle_\rho = \mathrm{tr} \left( O_1 \rho \right)$, and $\mathrm{tr}(O_1^2) = \mathrm{tr} \left( \mathbb{I}^{\otimes n} \right) =2^n$. In this case, a classical shadow of exponential size 
is required to accurately predict a single expectation value. In contrast, a direct spin measurement achieves the same accuracy with an order of $1/\epsilon^2$ copies of the state $\rho$ only. However, even in this regime, classical shadows still provide some advantages.
A shadow of size $n  2^n /\epsilon^2$ allows for predicting all possible spin expectation values simultaneously -- a square-root improvement over naively measuring all $4^n$ different spin combinations directly.
This scaling coincides with an optimal measurement strategy that groups all Pauli strings into $2^n+1$ groups each containing $2^n-1$ fully commuting Pauli strings \cite{yen2019measuring, crawford2019efficient, jena2019pauli, huggins2019efficient}. Furthermore, we will later show that we can also predict $M$ few-body Pauli observables using $\log(M)$ measurements.

\subsection{Matching information-theoretic lower bounds}

The non-example from above raises an important question: is the scaling with $\mathrm{tr}(O_i^2)$ fundamental, or a mere artifact of feature prediction with classical shadows?
A rigorous analysis reveals that this scaling is no mere artifact, but reflects fundamental restrictions from information theory.

\begin{theorem}[informal version] \label{thm:main2}
\emph{Any} linear prediction procedure based on a fixed set of independent measurements requires at least $\log (M) \max_i \mathrm{tr}(O_i^2)/\epsilon^2$ state copies.
\end{theorem}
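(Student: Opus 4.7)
I would reduce accurate linear prediction to a many-hypothesis distinguishability problem and invoke Fano's inequality. Let $B = \max_i \Tr(O_i^2)$ and choose $M$ traceless Hermitian matrices $O_1,\ldots,O_M$ that are orthogonal in the Hilbert--Schmidt inner product, $\Tr(O_i O_j) = B\,\delta_{ij}$. (Such an orthogonal family exists whenever $M \leq D^2-1$; the truly exponential regime $M \gg D^2$ requires a separate argument, either via approximately orthogonal packings in Hilbert--Schmidt space or by tensoring together hard instances on enlarged systems.) Associate to each index $X \in \{1,\ldots,M\}$ the candidate state
\begin{equation}
\rho_X \;=\; \frac{\mathbb{I}}{D} + \frac{3\epsilon}{B}\, O_X,
\end{equation}
which is a valid density operator for $\epsilon$ sufficiently small. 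Because $\Tr(O_j \rho_X) = 3\epsilon\,\delta_{jX}$, any procedure that estimates every $\Tr(O_j \rho)$ to accuracy $\epsilon$ recovers $X$ simply by selecting the largest estimate.

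Next, I would cast this recovery as a classical channel. Let $X \sim \mathrm{Unif}\{1,\ldots,M\}$, feed $N$ copies of $\rho_X$ through an arbitrary fixed (non-adaptive) sequence of POVMs $\mathcal{M}_1,\ldots,\mathcal{M}_N$, and record outcomes $Y_1,\ldots,Y_N$. Since the hypothetical decoder succeeds with, say, probability at least $2/3$, Fano's inequality~\cite{fano1961information_theory} forces $I(X;Y_1,\ldots,Y_N) \geq \tfrac{2}{3}\log M - 1$. The chain rule together with conditional independence of the $Y_k$ given $X$ bounds the left-hand side by $\sum_{k=1}^N I(X;Y_k)$.

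The technical core is a per-measurement upper bound: for \emph{every} POVM $\{F_a\}$,
\begin{equation}
I(X;Y_k) \;\leq\; C\,\epsilon^2/B
\end{equation}
for a universal constant $C$. The outcome probabilities $p_a^X = \Tr(F_a)/D + (3\epsilon/B)\Tr(F_a O_X)$ deviate from their $X$-average $\bar p_a = \Tr(F_a)/D$ by $O(\epsilon/B)$. Plugging into the standard chi-squared upper bound on mutual information and using the Bessel-type inequality $\sum_X \Tr(F_a O_X)^2 \leq B\,\Tr(F_a^2)$ together with POVM completeness $\sum_a F_a = \mathbb{I}$ collapses one factor of $B$ and yields the claimed $\epsilon^2/B$ scaling. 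Chaining this with Fano gives $N \geq C'\, B\log(M)/\epsilon^2$, which is exactly the advertised lower bound.

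I anticipate two obstacles. The main one is the per-measurement step: retaining the $1/B$ factor \emph{uniformly} over all (possibly continuous, rank-one) POVMs is delicate, because naive comparisons between chi-squared divergence and mutual information can introduce spurious factors that wash out the benign dependence on the Hilbert--Schmidt norm. A secondary obstacle is handling $M > D^2-1$, where one cannot construct a genuinely orthogonal family of observables and must instead either relax to approximate orthogonality (and absorb the relaxation into constants) or decompose the large-$M$ instance into many independent small-$M$ subproblems on disjoint subsystems.
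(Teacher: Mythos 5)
Your reduction to a decoding problem, the use of Fano's inequality, and the tensorization $I(X;Y)\leq\sum_k I(X;Y_k)$ all match the paper's strategy. The genuine gap is exactly where you suspected: the per-measurement bound $I(X;Y_k)\leq C\epsilon^2/B$ does \emph{not} follow from the chi-squared bound plus Bessel plus completeness for a codebook chosen independently of the POVM. Carrying out your computation for a rank-one POVM $F_a=w_aD|v_a\rangle\!\langle v_a|$ with $\sum_a w_a=1$ gives $I(X;Y_k)\leq 9\epsilon^2 D^2/(MB)$, because the average over $X$ only buys a factor $1/M$ while the Bessel sum runs over all $M$ codewords; you recover $\epsilon^2/B$ only when $M=\Theta(D^2)$. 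This is not an artifact of a loose bound: for $M\ll D^2$ your code states $\rho_X=\mathbb{I}/D+(3\epsilon/B)O_X$ have pairwise trace distance up to $(3\epsilon/B)\sqrt{BD}=3\epsilon\sqrt{D/B}$, so when $D\gtrsim B/\epsilon^2$ a Helstrom-type two-outcome POVM aligned with $O_1-O_2$ extracts $\Omega(1)$ bits from a \emph{single} copy. Since the theorem quantifies over all fixed POVMs, you would have to choose the $O_X$ adversarially against the given POVM sequence, and your sketch provides no mechanism for doing so uniformly over all $N$ measurements.

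The paper's resolution is a device your proposal is missing: a third party applies a Haar-random unitary $U$ to every transmitted code state, and reveals $U$ only after all measurements are done. Conditioning on $U$, the relevant quantity becomes $\E_U[p_a^2]-\E_U[p_a]^2$ for each rank-one POVM element, which a 2-design computation (Eq.~\eqref{eq:2design}) evaluates to $w_a\alpha^2\tfrac{D}{D+1}(\tfrac1r-\tfrac1D)\leq w_a\alpha^2/r$ for code states $\rho_i=(1-\alpha)\mathbb{I}/D+\alpha P_i/r$ built from rank-$r$ projectors; summing over $a$ gives $I(X:F\text{ on }U\rho_XU^\dagger\,|\,U)\leq\alpha^2/r=O(\epsilon^2/B)$ for \emph{every} POVM, independent of $M$ (Lemma~\ref{lem:randbound}). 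The rotation is harmless for decoding precisely because the black box is assumed to predict arbitrary features: Bob simply requests $UO_iU^\dagger$ after the fact. Two further differences from your sketch: the paper builds the codebook from random rank-$(B/4)$ projectors with pairwise overlap $\Tr(P_iP_j)/r<1/2$ via the probabilistic method (Lemma~\ref{lem:probabexist}) rather than exactly Hilbert--Schmidt-orthogonal traceless observables (which cannot satisfy the constraint $0\preceq O_i\preceq I$ anyway), and the regimes $M>\exp(D/32)$ and $B/9\epsilon^2>D$ are handled by rescaling the codebook, yielding the three-way minimum in Theorem~\ref{thm:main2restate} rather than by tensoring independent subproblems.
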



This fundamental result highlights that feature prediction with classical shadows is asymptotically optimal.
We refer to Appendix~\ref{sec:proofthm2} for details and proofs. 



There are a few ways to bypass this no-go theorem. One is performing joint quantum measurements on all copies of the quantum state $\rho$ simultaneously. Another is to use measurements that are attuned to the particular target functions in question.
For example, Aaronson's shadow tomography protocol avoids the scaling with $ \max_i \Tr(O_i^2)$ by utilizing exponential-size quantum circuits that depend on the $M$ target functions and act on all copies of the unknown state simultaneously.

\section{Numerical experiments}

\subsection{Direct fidelity estimation and comparison with machine learning approaches}

\begin{figure}[t]
    \centering
    \includegraphics[width=1.0\textwidth]{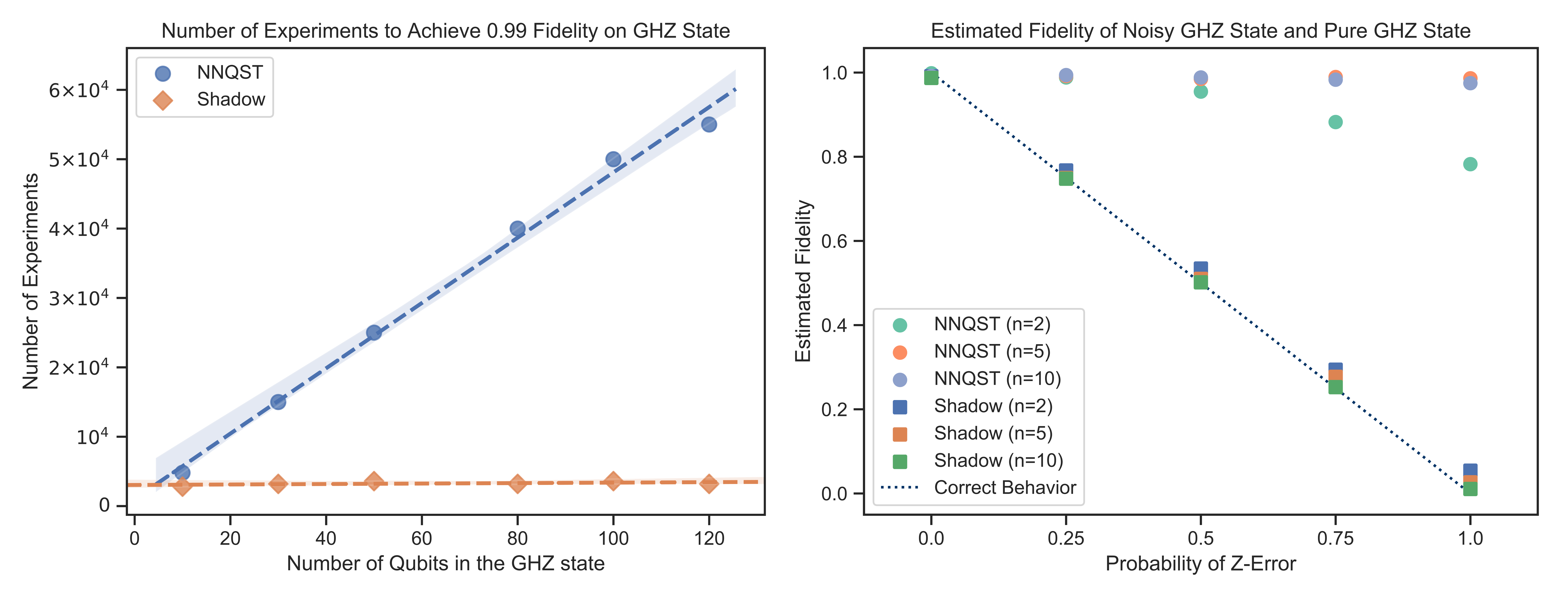}
    \caption{\emph{Comparison between classical shadow and neural network tomography (NNQST); GHZ states.} \\
\emph{Left:} Number of measurements required to identify an $n$-qubit GHZ state with 0.99 fidelity. The shaded regions are the standard deviation of the needed number of experiments over ten independent runs. \\
\emph{Right:} Estimated fidelity between a perfect GHZ target state and a noisy preparation, where $Z$-errors can occur with probability $p \in [0,1]$, under $6\times 10^4$ experiments. The dotted line represents the true fidelity as a function of $p$. \\
NNQST can only estimate classical fidelity (which is an upperbound of quantum fidelity) efficiently, so we consider classical fidelity for NNQST but remain using quantum fidelity for classical shadow.
}
    \label{fig:tomoGHZ}
\end{figure}

One of the key features of classical shadows is scalability. For a large class of features -- such as those with efficient stabilizer decomposition -- scalability extends to the computational cost associated with median of means prediction\footnote{The runtime of Algorithm~\ref{alg:median_means} is dominated by the cost of computing squared inner products $| \langle \psi | U_i | \hat{b}_i \rangle |^2$ in $2^n$ dimensions. If $| \psi \rangle$ is a stabilizer state, the Gottsman-Knill theorem allows for evaluation in $\mathcal{O}(n^2)$-time only.}.
This structure allows us to conduct numerical experiments for systems comprised of more than 160 qubits. 
Machine learning based approaches \cite{carrasquilla2019reconstructing, torlai2018neural} are among the most promising methods that still cover such regimes, where the number of degrees of freedom is roughly comparable to the total number of silicon atoms on earth ($2^{162} \simeq 10^{48.8}$).
The most recent version of \emph{neural network quantum state tomography} (NNQST) is a generative model that is based on a deep neural network trained on independent quantum measurement outcomes (local SIC/tetrahedral POVMs \cite{renes2004SIC}). 
We consider the task of learning classical representation of an unknown quantum system, and using the representation to predict fidelity with some target state.

\paragraph{GHZ states}

In \cite{carrasquilla2019reconstructing}, the viability of NNQST is demonstrated by considering GHZ states on a varying number of qubits $n$. Numerical experiments highlight that the number of measurement repetitions (size of the training data) to learn a neural network model of the GHZ state that achieves target fidelity of $0.99$ scales linearly in $n$.
We have also implemented NNQST for GHZ states and compared it to classical shadows with median of means prediction. The left hand side of Figure~\ref{fig:tomoGHZ} confirms the linear scaling of NNQST and the assertion of Theorem~\ref{thm:main}: classical shadows of \emph{constant} size suffice to accurately estimate GHZ target fidelities, regardless of the actual system size. Subsequently, we have also tested the capability of both approaches to detect potential state preparation errors.
More precisely, we consider a scenario where the GHZ-source introduces a phase error with probability $p \in [0,1]$: 
\begin{equation*}
\rho_p = (1-p) |\psi_\mathrm{GHZ}^+(n)\rangle \! \langle \psi_\mathrm{GHZ}^+ (n)|+ p|\psi_\mathrm{GHZ}^-(n)\rangle \! \langle \psi_\mathrm{GHZ}^- (n)|, \quad | \psi_\mathrm{GHZ}^\pm (n) \rangle = \tfrac{1}{\sqrt{2}} \left( |0 \rangle^{\otimes n} \pm |1 \rangle^{\otimes n} \rangle \right).
\end{equation*}
We learn a classical representation of the GHZ-source and subsequently predict the fidelity with the pure GHZ state.
The right hand side of Figure~\ref{fig:tomoGHZ} highlights that classical shadow prediction accurately tracks the decrease in target fidelity as the error parameter increases. NNQST, in contrast, seems to consistently overestimate this target fidelity. 
In the extreme case ($p=1$), the true underlying state is completely orthogonal to the target state, but NNQST nonetheless reports fidelities close to one. 
We believe that this shortcoming arises because the POVM-based machine learning approach can only estimate classical fidelity efficiently, which is an upper bound of the actual quantum fidelity. Also, the existing approach uses local measurements that may be ineffective at capturing global phases.
The same issue holds for the original RBM-based neural network tomography \cite{torlai2018neural} due to the necessity of using local measurements in achieving efficient training.
Potential solutions include devising an effective estimation of quantum fidelity in deep generative models and designing neural networks that can learn from highly non-local measurements.
Achieving these is non-trivial and remains open.

\paragraph{Toric code ground states}

\begin{figure}[t]
    \centering
    \includegraphics[width=1.0\textwidth]{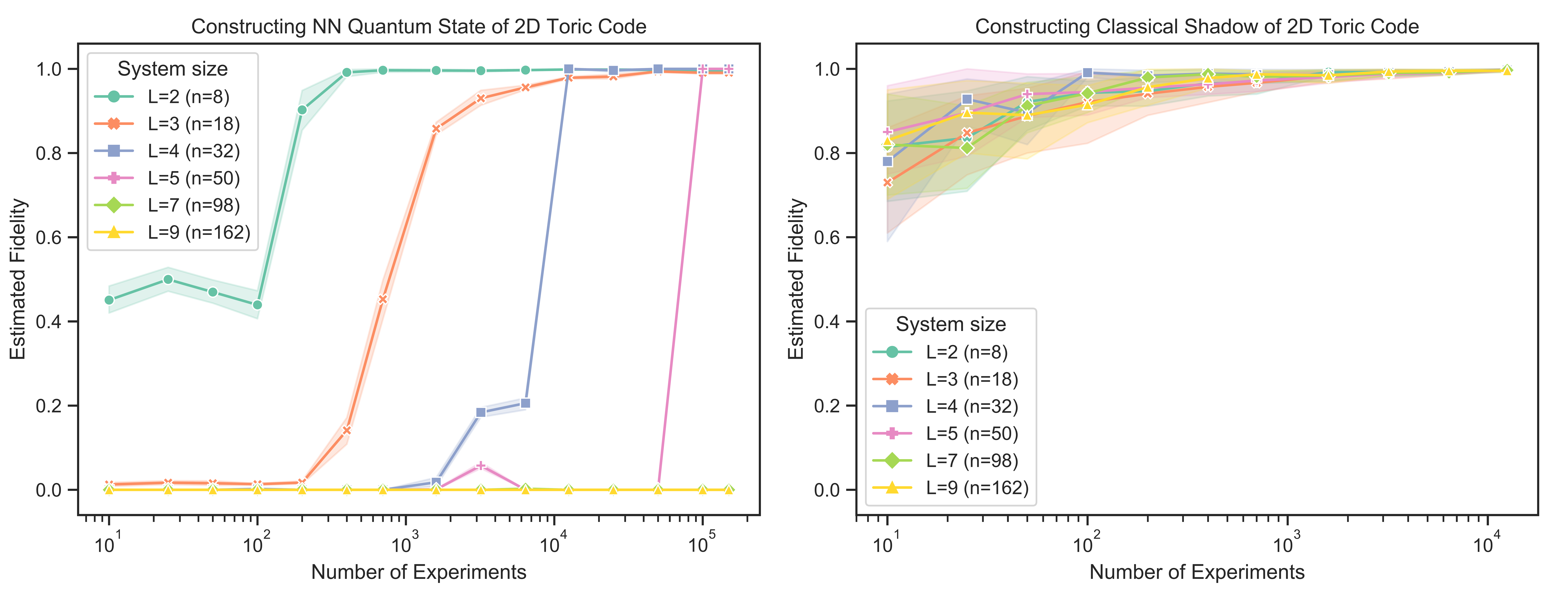}
    \caption{\emph{Comparison between classical shadow and neural network tomography (NNQST); toric code:} \\
\emph{Left:} Number of measurements required for neural network tomography to identify a particular toric-code ground state. We use classical fidelity for NNQST, which is an upper bound for quantum fidelity.
\emph{Right:} Performance of classical shadows for the same problem. We use quantum fidelity for classical shadows.
The shaded regions are the standard deviation of the estimated fidelity over ten runs.
}
    \label{fig:tomoToric}
\end{figure}

While highly instructive from a theoretical perspective, GHZ states comprised of 100 qubits are very fragile and challenging to implement in practice. 
We have also conducted experiments for target states that are more physical. 
\emph{Toric code ground states} serve as an excellent test case. Not only are they
the most prominent example of a topological quantum error correcting code and thus highly relevant for quantum computing devices. They also correspond to ground states of a Hamiltonian: $H=-\sum_v A_v - \sum_p B_p$, where $A_v$ and $B_p$ denote vertex- and plaquette operators\footnote{$A_v$ is the product of four Pauli-$X$ operators around a vertex $v$, while $B_p$ is the product of four Pauli-$Z$ operators around the plaquette $p$.}.
The ground space of $H$ is four-fold degenerate and we select the superposition of all closed-loop configurations ($| \psi \rangle \propto \sum_{S:\textrm{ closed loop}} |S \rangle$) as a test state for both classical shadow and NNQST: how many measurement repetitions are required to accurately identify this toric code ground state with high fidelity? The results are shown in Figure~\ref{fig:tomoToric}. 
Neural network tomography based on deep generative model seems to require a number of samples that scales unfavorably in the system size $n$ (left). In contrast, fidelity estimation with classical shadows is completely independent of the system size.
The difficulty of NNQST in learning 2D toric code may be related to some observed failures of deep learning \cite{shalev2017failures} for learning patterns with combinatorial structures.
In Appendix~\ref{sec:detail}, we provide further analysis of potential difficulty in using machine learning approaches to reconstruct some simple quantum states due to a well-known computational hardness conjecture.

\subsection{Witnesses for tri-partite entanglement}

\begin{figure}[t]
    \centering
    \includegraphics[width=1.0\textwidth]{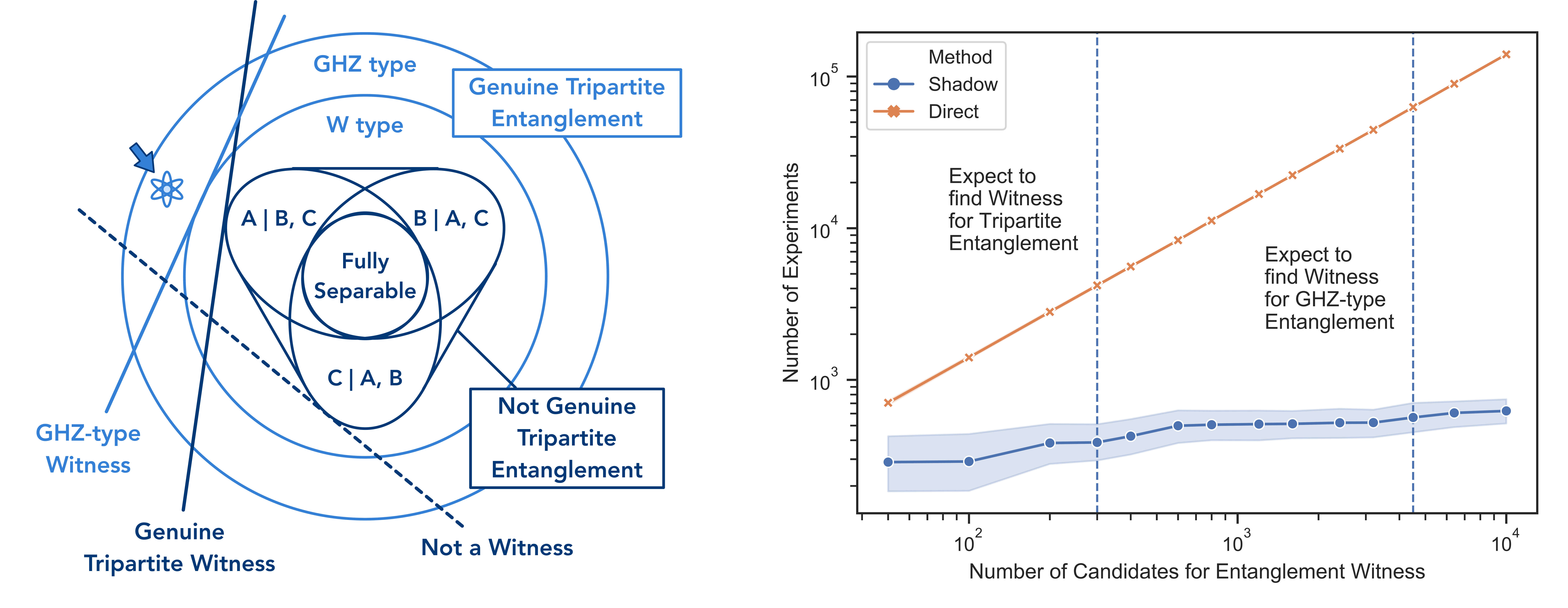}
    \caption{\emph{Detection of GHZ-type entanglement for 3-qubit states:} \\
\emph{Left:} Schematic illustration of 3-partite entanglement. Entanglement witnesses are linear functions that separate part of one entanglement class from all other classes. \\ 
\emph{Right:} Number of entanglement witnesses vs.\ number of experiments required to accurately estimate all of them. The dashed lines represent the expected number of (random) entanglement witnesses required to detect genuine three-partite entanglement and GHZ-type entanglement in a randomly rotated GHZ state. The shaded region is the standard deviation of the required number of experiments over ten independent repetitions of the entire setup.}
    \label{fig:witness}
\end{figure}

Entanglement is at the heart of virtually all quantum communication and cryptography protocols and an important resource for quantum technologies in general. 
This renders the task of detecting entanglement important both in theory and practice \cite{friis2019entanglement,guehne2009entanglement}. While bi-partite entanglement is comparatively well-understood, multi-partite entanglement has a much more involved structure. 
Already for $n=3$ qubits, there is a variety of inequivalent entanglement classes. These include fully-separable, as well as bi-separable states, $W$-type states and finally GHZ-type states. The relations between these classes are summarized in Figure~\ref{fig:witness} and we refer to \cite{acin2001classification} for a complete characterization. Despite this increased complexity, entanglement witnesses remain a simple and useful tool for testing which class a certain state $\rho$ belongs to. 
However, any given entanglement witness only provides a one-sided test -- see Figure~\ref{fig:witness} (left) for an illustration -- and it is often necessary to compute multiple witnesses for a definitive answer. 

Classical shadows can considerably speed up this search: according to Theorem~\ref{thm:main} a classical shadow of moderate size allows for checking an entire list of fixed entanglement witnesses simultaneously. Figure~\ref{fig:witness} (right) underscores the economic advantage of such an approach over measuring the individual witnesses directly. 
Directly measuring $M$ different entanglement witnesses requires a number of  quantum measurements that scales (at least) linearly in $M$. In contrast, classical shadows get by with $\log (M)$-many measurements only.

More concretely, suppose that the state to be tested is a local, random unitary transformation of the GHZ state. Then, this state is genuinely tripartitely entangled and moreover belongs to the GHZ class. 
The dashed vertical lines in Figure~\ref{fig:witness} (right) denote the expected number of (randomly selected) witnesses required to detect genuine tri-partite entanglement (first) and GHZ-type entanglement (later). 
From the experiment, we can see that classical shadows achieve these thresholds with an exponentially smaller number of samples than the naive direct method.
Finally, classical shadows are based on random Clifford measurements and do not depend on the structure of the concrete witness in question. This is another practical advantage over direct estimation, which crucially depends on the concrete witness in question that may be difficult to measure. 

\section{Discussion}

\subsection{Related work}

\paragraph{General quantum state tomography} 

The task of reconstructing a full classical description -- the density matrix $\rho$ -- of a $D$-dimensional quantum system from experimental data is one of the most fundamental problems in quantum statistics, see e.g.\ \cite{hradil1997tomography,blumekohout2010tomography,gross2010compressed,gross2013focus} and references therein. Sample-optimal protocols, i.e.\ estimation techniques that get by with a minimal number of measurement repetitions, have only been developed recently. 
Information-theoretic bounds assert that an order of $\mathrm{rank}(\rho)D$ state copies are necessary to fully reconstruct $\rho$  \cite{haah2017sample}. 
Constructive protocols \cite{wright2016tomography,haah2017sample} saturate this bound, but require entangled circuits and measurements that act on all state copies simultaneously. More tractable measurement procedures, where each copy of the state is measured independently, require an order of $\mathrm{rank}(\rho)^2 D$ measurements \cite{haah2017sample}. This more stringent bound is saturated by low rank matrix recovery \cite{flammia2012compressed,kueng2017lowrank,kueng2016low} and projected least squares estimation \cite{guta2018fast}. 

These results highlight an exponential bottleneck for tomography protocols that work in full generality: at least $D=2^n$ copies of an unknown $n$-qubit state are necessary. This exponential scaling extends to the computational cost associated with storing and processing the measurement data.

\paragraph{Matrix product state tomography} 

Restricting attention to highly structured subsets of quantum states sometimes allows for overcoming the exponential bottleneck that plagues general tomography. 
Matrix product state (MPS) tomography \cite{cramer2010efficient} is the most prominent example for such an approach. It only requires a polynomial number of samples, provided that the underlying quantum state is well approximated by a MPS with low bond dimension. In quantum many body physics this assumption is often justifiable \cite{lanyon2017efficient}. However, MPS representations of general states have exponentially large bond dimension. In this case, MPS tomography offers no advantage over general tomography.

\paragraph{Neural network tomography}

Recently, machine learning has also been applied to the problem of predicting features of a quantum systems. 
These approaches construct a classical representation of the quantum system by means of a deep neural network that is trained by feeding in quantum measurement outputs. 
Compared to MPS tomography, neural network tomography may be more broadly applicable \cite{gao2017efficient, torlai2018neural, carrasquilla2019reconstructing}. 
However, the actual class of systems that could be efficiently represented, reconstructed and manipulated is still not well understood.

\paragraph{Compressed classical description of quantum states}

To circumvent the exponential scaling in representing quantum states, Gosset et al. \cite{gosset2018compressed} have proposed a stabilizer sketching approach that compresses classical description of quantum states to a subexponential-sized description.
The stabilizer sketches bear some similarity with classical shadows.
However, stabilizer sketching requires a fully-characterized exponential-sized classical description of the quantum state, so it still suffer from an exponential scaling in the resources used in practice.

\paragraph{Direct fidelity estimation}

Direct fidelity estimation is a procedure that allows for predicting a single pure target fidelity $\langle \psi| \rho | \psi \rangle$ up to accuracy $\epsilon$. 
The best-known technique is based on few Pauli measurements that are selected randomly using importance sampling \cite{flammia2011direct}. 
The required number of samples depends on the target: it can range from a dimension-independent order of $1/\epsilon^2$
(if $| \psi \rangle$ is a stabilizer state) to roughly $2^n/\epsilon^4$ in the worst case.

\paragraph{Shadow tomography}

Shadow tomography aims at simultaneously estimating the probability associated with $M$ 2-outcome measurements up to accuaracy $\epsilon$: $p_i (\rho) = \mathrm{tr}(E_i \rho)$, where each $E_i$ is a positive semidefinite matrix whose with operator norm is at most one \cite{aaronson2018shadow,brandao2019sdp,aaronson2019gentle}. 
This may be viewed as a generalization of fidelity estimation.
The best existing result due to Aaronson et al. \cite{aaronson2019gentle} showed that
\begin{equation}
N= \tilde{\mathcal{O}} \left(\log (M)^2 \log (D)^2/\epsilon^8 \right)
\label{eq:aaronson_scaling}
\end{equation} copies of the unknown state\footnote{The scaling symbol $\tilde{\mathcal{O}}$ suppresses logarithmic expressions in other problem-specific parameters.}  suffice to achieve this task.
In a nutshell, his protocol is based on gently measuring the 2-outcome measurements one-by-one and subsequently (partially) reverting the perturbative effects a measurement exerts on quantum states. This task is achieved by explicit quantum circuits of exponential size that act on all copies of the unknown state simultaneously. 
This rather intricate procedure bypasses the no-go result advertised in Theorem~\ref{thm:main2}
and results in a sampling rate that is independent of the measurement in question -- only their cardinality $M$ matters.

\subsection{Summary of classical shadows}

It is fruitful to think of classical shadows as an alternative protocol for shadow tomography that emphasizes tractability. It also allows prediction of arbitrary linear target functions \eqref{eq:linear_predictions}.
Data acquisition is achieved by sequentially performing simple Clifford measurements on independent copies of the unknown state. This setup is motivated by the
abundant appearance of Clifford circuits in many practical applications, such
as fault-tolerant quantum computing \cite{nielsen2000quantum} and randomized benchmarking \cite{emerson2005randomized}.
The resulting data -- the \emph{classical shadow} of $\rho$ -- is comprised of stabilizer states that can be stored very efficiently. Prediction is done via
simple median of means estimation (Algorithm~\ref{alg:median_means}).
We equip this procedure with a rigorous statistical convergence guarantee:
\begin{equation}
N = \tilde{\mathcal{O}} \left( \frac{\log (M)}{\epsilon^2} \max_{1 \leq i \leq M} \mathrm{tr}(O_i^2) \right) \label{eq:our_scaling}
\end{equation}
Clifford measurements suffice to ensure $\epsilon$-accurate prediction of all target functions simultaneously. In contrast to Eq.~\eqref{eq:aaronson_scaling}, this scaling depends on the predictors. If each $O_i$ has (approximately) constant Hilbert-Schmidt norm, Eq.~\eqref{eq:our_scaling} improves over Aaronson's performance guarantees. This includes direct fidelity estimation ($O_1=| \psi \rangle \! \langle \psi|$) as a particularly relevant special case. Classical shadows extend the best case guarantee from \cite{flammia2011direct} to \emph{all} states -- an exponential improvement.
However, other target functions -- like Pauli observables -- have Hilbert-Schmidt norms that scale exponentially in the number of qubits. If this is the case, shadow tomography \eqref{eq:aaronson_scaling} does provide an exponential advantage.
This is not a flaw of classical shadows, but a fundamental feature of prediction schemes that are based on measuring state copies independently (Theorem~\ref{thm:main2}): tractable measurement procedures come at a price. 

We emphasize that tractability also extends to the prediction stage, especially if the target functions are fidelities with stabilizer states. This allowed us to support our theoretical findings with numerical experiments on quantum systems comprised of more than 160 qubits. 
Our findings do not only support our theoretical results.
We also observe that simple median of means estimation using classical shadows shows consistent improvement over existing neural network approach based on local SIC/tetrahedral measurements at predicting relevant, non-local target functions over different system sizes.

\subsection{Outlook}

\paragraph{Practical implementation}

We have designed classical shadows as a tool to tackle current challenges in quantum information processing. We have prioritized tractability in both quantum (Clifford circuits and computational basis measurements) and classical (stabilizer formalism) aspects of the protocol. This should in principle allow for a concrete implementation in state of the art quantum architectures. We are curious to test how this protocol performs in practice.

\paragraph{Extension to quantum channel tomography}

It would be highly interesting to extend classical shadow prediction to quantum channel tomography. Is it possible to create compressed classical sketches of unknown quantum channels that allow for accurately predicting their action on an exponential number of input states?

\paragraph{Synthesis with shadow tomography}

Relying on independent quantum measurements imposes fundamental restrictions on any prediction procedure. This restrictions can be overcome if one stores independent copies of the state in a quantum memory and performs joint quantum manipulations and measurements \cite{aaronson2018shadow,aaronson2019gentle}. 
Existing protocols that exploit this feature do not utilize classical representations of the state to make predictions. Instead, they jointly measure the targets directly. Is it possible to combine both ideas to create improved classical shadows with stronger performance guarantees?

\paragraph{Synthesis with machine learning}

Our numerical simulations highlight that prediction using classical shadows already offer advantages over existing methods without using any modern machine learning techniques.
An interesting direction would be to combine machine learning and classical shadows, e.g., by training neural networks using classical shadows (obtained from random Clifford measurements) as the input data.
While efficiently training neural networks with unitary-rotated measurements may be challenging, we believe the synthesis of machine learning and classical shadows would be fruitful in achieving improved performance with strong theoretical guarantees.

\subsection*{Acknowledgments:} The authors want to thank Victor Albert, Fernando Brand\~{a}o, John Preskill, Ingo Roth, Joel Tropp, and Thomas Vidick for valuable inputs and inspiring discussions. Leandro Aolita and Giuseppe Carleo provided helpful advice regarding presentation.
Our gratitude extends, in particular, to Joseph Iverson who helped us devising a numerical sampling strategy for toric code ground states.
HH is supported by the Kortschak Scholars Program.
RK acknowledges funding provided by the Office of Naval Research (Award N00014-17-1-2146) and the Army Research Office (Award W911NF121054).

\subsection*{Author Contributions:} All authors contributed equally in the work presented in this paper.

\subsection*{Competing interests:} The authors declare no competing interests.

\begin{appendix}

\section{Proof of Theorem~\ref{thm:main}}
\label{sec:thm1proof}

\subsection{The stabilizer formalism}

Clifford circuits were introduced by Gottesman \cite{gottesman1997stabilizer} and form an indispensable tool in  quantum information processing. Applications range from quantum error correction \cite{nielsen2000quantum}, to measurement-based quantum computation \cite{raussendorf2001measurementbased,briegel2009measurementbased} and randomized benchmarking \cite{emerson2005randomized,knill2008randomized,magesan2011randomized}.

For systems comprised of $n$ qubits, the Clifford group is generated by CNOT, Hadamard and phase gates. This results in a finite group of cardinality $2^{\mathcal{O}(n^2)}$ that maps (tensor products of) Pauli matrices to Pauli matrices upon conjugation. This underlying structure allows for efficiently storing and simulating Clifford circuits on classical computers -- a result commonly known as Gottesman-Knill theorem. 

If one applies a Clifford circuit $C$ to a computational basis state $|b \rangle$, one produces a \emph{stabilizer state} $|s \rangle = C |b \rangle$. 
A wide variety of conceptually and practically interesting quantum states belong into this category, e.g.\ GHZ states \cite{greenberger1989GHZ,hein2006entanglement}, cluster states \cite{nielsen2006cluster}  and toric code ground states \cite{kitaev2002toric}. Nonetheless, the Gottesman-Knill theorem asserts that keeping track of stabilizer states only requires limited classical resources, see Table~\ref{tab:stabilizer_cost} for a selection of important features.
Classical shadows correspond to lists of stabilizer states and enjoy all these benefits. 

\begin{table}
\begin{tabular}{ccc}
\textbf{task} & \textbf{cost} & \textbf{reference} \\
\hline
Sample a random Clifford circuit & $\mathcal{O}(n^2)$ classical bits of randomness   & \cite{koenig2014efficiently} \\
 &  $\mathcal{O}(n^3)$ classical pre-processing time & \\
Implement Clifford circuit & $\mathcal{O}(n^2 /\log (n))$ elementary 2-qubit gates & \cite{aaronson2004improved} \\
Store stabilizer state & $\mathcal{O}(n^2)$ classical bits &  \cite{aaronson2004improved} \\
Overlaps between stabilizer states & $\mathcal{O}(n^2)$ classical flops
 & \cite{garcia2012efficient,kueng2015qubit}
\end{tabular}
\caption{Cost associated with various Clifford/stabilizer operations on $n$ qubits: all relevant subroutines for obtaining Clifford shadows \eqref{eq:classical_shadow} scale favorably in the problem dimension $D=2^n$.}
\label{tab:stabilizer_cost}
\end{table}

For the purpose of Theorem~\ref{thm:main} it is instructive to re-interpret the measurement procedure that gives rise to classical shadows. 
The union of all possible stabilizer states comprises a single, tomographic complete quantum measurement (POVM) with cardinality $| \mathrm{STAB}|=2^n \prod_{j=1}^n \left(2^j+1 \right) =2^{\mathcal{O}(n^2)}$ \cite[Corollary~21]{gross2006hudson}:
\begin{equation}
\mathrm{STAB} = \left\{ \tfrac{2^n}{| \mathrm{STAB}|} |s \rangle \! \langle s|: \textrm{$s$ is stabilizer state} \right\}. \label{eq:stab_measurement}
\end{equation}

\begin{fact} \label{fact:classical_shadow}
Performing a randomly selected Clifford circuit and subsequently measuring the computational basis is equivalent to performing the stabilizer measurement \eqref{eq:stab_measurement}.
\end{fact}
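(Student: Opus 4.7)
My plan is to compare the POVM induced by the random-Clifford-plus-basis-measurement procedure directly with the stabilizer POVM in Eq.~\eqref{eq:stab_measurement}. First, I would write down the induced POVM in its most natural form: sampling a Clifford $U$ uniformly from $\mathrm{Cl}(n)$ and then measuring $U\rho U^\dagger$ in the computational basis yields outcome $(U,b)$ with probability
\begin{equation*}
\Pr[(U,b)] = \frac{1}{|\mathrm{Cl}(n)|}\langle b|U\rho U^\dagger|b\rangle = \mathrm{tr}\!\left(\frac{1}{|\mathrm{Cl}(n)|}\,U^\dagger|b\rangle\langle b|U\,\rho\right),
\end{equation*}
so the associated POVM element is $E_{U,b} = \tfrac{1}{|\mathrm{Cl}(n)|}U^\dagger|b\rangle\langle b|U$. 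Each $U^\dagger|b\rangle$ is, by definition, a stabilizer state, so every $E_{U,b}$ is a rescaled rank-one projector onto some element of $\mathrm{STAB}$.

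Next, I would coarse-grain this fine-grained POVM by grouping all pairs $(U,b)$ that produce the same stabilizer state (up to an irrelevant global phase, which does not affect the rank-one projector $|s\rangle\langle s|$). For a fixed stabilizer state $|s\rangle$, the coarse-grained element becomes
\begin{equation*}
M_s = \sum_{(U,b)\,:\,U^\dagger|b\rangle \propto |s\rangle} \frac{1}{|\mathrm{Cl}(n)|}\,|s\rangle\langle s| = \frac{N(s)}{|\mathrm{Cl}(n)|}\,|s\rangle\langle s|,
\end{equation*}
where $N(s)$ counts the number of such pairs. The structural input I would invoke is transitivity: the Clifford group acts transitively on the set of stabilizer states (every stabilizer state is of the form $C|0\rangle^{\otimes n}$ for some Clifford $C$), which, together with the identity $U^\dagger|b\rangle = (X^b U)^\dagger |0\rangle^{\otimes n}$ absorbing the basis label into a Clifford relabeling, implies $N(s)$ is the same constant for every $|s\rangle \in \mathrm{STAB}$.

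The final step is a double-counting: each pair $(U,b)$ produces exactly one stabilizer state, so $\sum_{s} N(s) = |\mathrm{Cl}(n)|\cdot 2^n$. Combined with the orbit argument this forces $N(s) = |\mathrm{Cl}(n)|\cdot 2^n / |\mathrm{STAB}|$, and substitution gives $M_s = \tfrac{2^n}{|\mathrm{STAB}|}|s\rangle\langle s|$, which is precisely the POVM element in Eq.~\eqref{eq:stab_measurement}. The main subtlety I expect to handle carefully is the phase-equivalence issue — distinct pairs $(U,b)$ can differ by a nontrivial global phase while producing the same rank-one projector — and the clean justification of transitivity, for which I would appeal to the standard stabilizer-formalism fact that any stabilizer state can be prepared from $|0\rangle^{\otimes n}$ by a Clifford circuit (Gottesman--Knill).
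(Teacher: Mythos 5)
The paper states Fact~\ref{fact:classical_shadow} without any proof, so there is no in-paper argument to compare against; judged on its own terms, your argument is correct and complete. The structure is exactly right: the fine-grained POVM element $\tfrac{1}{|\mathrm{Cl}(n)|}U^\dagger |b\rangle\!\langle b| U$ is a rescaled stabilizer projector, coarse-graining over pairs $(U,b)$ yielding the same ray gives $M_s = \tfrac{N(s)}{|\mathrm{Cl}(n)|}|s\rangle\!\langle s|$, and the double count $\sum_s N(s) = 2^n |\mathrm{Cl}(n)|$ pins down the normalization once $N(s)$ is known to be constant. The one step worth spelling out is \emph{why} $N(s)$ is constant: your identity $U^\dagger|b\rangle = (X^b U)^\dagger |0\rangle^{\otimes n}$ (valid since $X^b$ is Hermitian, self-inverse, and Clifford) reduces the count to $N(s) = 2^n\,\bigl|\{V \in \mathrm{Cl}(n) : V^\dagger|0\rangle^{\otimes n} \propto |s\rangle\}\bigr|$, and these sets are precisely the cosets of the subgroup fixing the ray $|0\rangle^{\otimes n}$ under $V \mapsto V^\dagger|0\rangle\!\langle 0|^{\otimes n}V$; the orbit--stabilizer theorem gives equal coset sizes, while transitivity (every stabilizer state is $C|0\rangle^{\otimes n}$ for some Clifford $C$) guarantees each set is nonempty. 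Transitivity alone, as you invoke it, does not immediately give equal counts without this coset observation. A pleasant by-product of your derivation, worth noting, is that it also verifies that the collection in Eq.~\eqref{eq:stab_measurement} really is a POVM (sums to $\mathbb{I}$), since $\sum_b U^\dagger|b\rangle\!\langle b|U = \mathbb{I}$ for each $U$ and coarse-graining preserves the completeness relation.
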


The set of all $n$-qubit stabilizer states comprises a complex projective 3-design \cite{webb2015clifford,zhu2017clifford,kueng2015qubit}: sampling uniformly from it reproduces the first 3 moments of the uniform (Haar) measure over all possible pure states:
\begin{equation*}
\tfrac{1}{| \mathrm{STAB}|}\sum_{|s\rangle \textrm{ stabilizer state}} \left(|s \rangle \! \langle s| \right)^{\otimes k} = \int_{|v \rangle \textrm{ unif}}  \left(|v \rangle \! \langle v| \right)^{\otimes k} \mathrm{d} \mu (v) \quad \textrm{for} \quad k=1,2,3.
\end{equation*}
The right hand side of this equation can be evaluated explicitly by using techniques from representation theory, see e.g.\ \cite[Sec.~3.5]{gross2015partial}.
This in turn yields closed-form expressions for averages of quadratic and cubic polynomials over stabilizer states. For instance, let $A,B,C$ be Hermitian $2^n \times 2^n$ matrices and assume that $C$ is traceless. Then,
\begin{align}
\tfrac{1}{| \mathrm{STAB}|} \sum_{|s \rangle\textrm{ stabilizer state}} \langle s | A |s \rangle \langle s | B |s \rangle =&
\int_{\textrm{unif}}\langle u| A| u \rangle  \langle u| B |u \rangle \mathrm{d} \mu (u) =
\frac{\mathrm{tr}(A) \mathrm{tr}(B) + \mathrm{tr}(AB)}{(2^n+1)2^n}, \label{eq:2design} \\
\tfrac{1}{| \mathrm{STAB}|} \sum_{|s \rangle\textrm{ stabilizer state}}
\langle s| A |s \rangle \langle s |C |s \rangle^2 =&
\int_{\mathrm{unif}}\langle u|A|u \rangle \langle u|C|u \rangle^2 \mathrm{d} \mu (u)
=
\frac{\mathrm{tr}(A) \mathrm{tr}(C^2) + 2 \mathrm{tr}(A C^2)}{(2^n+2) (2^n+1) 2^n}.
\label{eq:3design}
\end{align}
This feature has immediate consequences for linear feature estimation with classical shadows. 

\begin{lemma} \label{lem:unbiased}
Suppose that we perform the stabilizer measurement \eqref{eq:stab_measurement} on an unknown $n$-qubit state $\rho$. Upon receiving the outcome $|s \rangle$, we set $\hat{\rho} = (2^n+1)|s \rangle \! \langle s|-\mathbb{I}$. 
Then, for any Hermitian $2^n \times 2^n$ matrix $O$, $\hat{o}(1,1) = \mathrm{tr} \left( O \hat{\rho} \right)$ is a random variable that obeys
\begin{equation}
\mathbb{E}\left[\hat{o}(1,1)\right] = \mathrm{tr}(O \rho) \quad \textrm{and} \quad  \mathrm{Var} \left[ \hat{o}(1,1)\right] \leq 3 \mathrm{tr} (O^2).
\label{eq:mean+variance}
\end{equation}
\end{lemma}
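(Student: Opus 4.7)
The plan is to evaluate the relevant moments directly by invoking Fact~\ref{fact:classical_shadow} together with the moment formulas \eqref{eq:2design} and \eqref{eq:3design}. By Fact~\ref{fact:classical_shadow}, the random outcome $|s\rangle$ is distributed on the stabilizer states with probability $\Pr[s]=\tfrac{2^n}{|\mathrm{STAB}|}\langle s|\rho|s\rangle$, so $\hat{o}(1,1) = (2^n+1)\langle s|O|s\rangle - \mathrm{tr}(O)$ is a scalar random variable whose moments can be written as explicit sums over $\mathrm{STAB}$.

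For the mean I would plug in the Born probabilities and split the result into two pieces. The first piece, $(2^n+1)\tfrac{2^n}{|\mathrm{STAB}|}\sum_s\langle s|\rho|s\rangle\langle s|O|s\rangle$, is a quadratic polynomial in $|s\rangle\langle s|$ and is evaluated by \eqref{eq:2design} with $A=\rho$, $B=O$, yielding $\tfrac{\mathrm{tr}(O)+\mathrm{tr}(\rho O)}{2^n+1}$ after absorbing the $2^n$ prefactor. The second piece, $-\mathrm{tr}(O)\tfrac{2^n}{|\mathrm{STAB}|}\sum_s\langle s|\rho|s\rangle$, collapses via the 1-design property (a special case of \eqref{eq:2design}) to $-\mathrm{tr}(O)$. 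These two contributions combine to give exactly $\mathrm{tr}(\rho O)$, establishing unbiasedness.

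For the variance I would first reduce to the traceless case. Since $\mathrm{tr}(\hat\rho)=1$, writing $O=O_0+\tfrac{\mathrm{tr}(O)}{2^n}\mathbb{I}$ with $O_0$ traceless gives $\hat o(1,1)=\mathrm{tr}(O_0\hat\rho)+\tfrac{\mathrm{tr}(O)}{2^n}$, so the variance is unchanged by replacing $O$ with $O_0$ and it suffices to bound $\mathrm{Var}[\mathrm{tr}(O_0\hat\rho)]\le\mathbb{E}[(2^n+1)^2\langle s|O_0|s\rangle^2]$. This last expectation is a cubic polynomial in $|s\rangle\langle s|$ of exactly the form handled by \eqref{eq:3design} with $A=\rho$ and $C=O_0$ (traceless as required). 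After carrying out the arithmetic the bound becomes $\tfrac{2^n+1}{2^n+2}\bigl(\mathrm{tr}(O_0^2)+2\,\mathrm{tr}(\rho O_0^2)\bigr)$. Using $\mathrm{tr}(\rho O_0^2)\le\|O_0^2\|_\infty\le\mathrm{tr}(O_0^2)$ (because $O_0^2\succeq 0$ and $\rho$ is a density matrix) and $\tfrac{2^n+1}{2^n+2}<1$, this collapses to $3\,\mathrm{tr}(O_0^2)\le 3\,\mathrm{tr}(O^2)$, which is the claimed bound.

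Essentially no real obstacle is anticipated; the whole proof is a direct application of the design identities, and the only subtlety worth flagging is the choice of the prefactor $2^n+1$ in the estimator. That constant is dictated by the 2-design formula: it is exactly what is required to cancel the $(2^n+1)$ in the denominator of \eqref{eq:2design} and thereby convert the quadratic average into $\mathrm{tr}(\rho O)$ once the identity offset $\mathbb{I}$ is subtracted. Recognizing this matching is the one non-bookkeeping observation; everything else reduces to carefully substituting $A$, $B$, $C$ into the pre-computed averages.
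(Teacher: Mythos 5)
Your proposal is correct and follows essentially the same route as the paper: Born's rule combined with the 2-design identity \eqref{eq:2design} for unbiasedness, then a shift to the traceless part $O_0$ followed by the 3-design identity \eqref{eq:3design} and the chain $\mathrm{tr}(\rho O_0^2)\leq\mathrm{tr}(O_0^2)\leq\mathrm{tr}(O^2)$ for the variance. The only nitpick is a dangling antecedent in the mean computation --- the quantity $\tfrac{\mathrm{tr}(O)+\mathrm{tr}(O\rho)}{2^n+1}$ is the value of $\mathbb{E}\left[\langle s|O|s\rangle\right]$ rather than of the full first piece including the $(2^n+1)$ prefactor --- but your final cancellation and conclusion are right.
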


\begin{proof}
 Born's rule asserts that the probability for obtaining the outcome $|s \rangle$ is $\tfrac{2^n}{| \mathrm{STAB}|} \langle s| \rho |s \rangle$. Combine this with Eq.~\eqref{eq:2design} and $\mathrm{tr}(\rho)=1$ to establish the first claim:
\begin{align*}
\mathbb{E}\left[ \hat{o}(1,1) \right]+ \mathrm{tr}(O) =& (2^n+1) \mathbb{E} \left[ \langle s| O |s \rangle \right]= \tfrac{(2^n+1)2^n}{| \mathrm{STAB}|} \sum_{s} \langle s| \rho |s \rangle \langle s| O |s \rangle = \mathrm{tr}(O\rho) + \mathrm{tr}(\rho) \mathrm{tr}(O).
\end{align*}
For the second claim,
 note that the variance of any random variable $X$ is invariant under shifts: $\mathrm{Var}[X] = \mathrm{Var}[X-\beta]$ for all $\beta \in \mathbb{R}$. Let $O_0 = O - \tfrac{\mathrm{tr}(O)}{2^n} \mathbb{I}$ be the traceless part of $O$ and set 
 $\hat{o}_0(1,1) = \mathrm{tr} \left( \tilde{O} \hat{\rho}\right)=o(1,1) - \mathrm{tr}(O)/2^n$.
Shift-invariance then implies $\mathrm{Var}\left[\hat{o}(1,1)\right]=\mathrm{Var}\left[ \hat{o}_0 (1,1) \right] \leq \mathbb{E} \left[ \mathrm{tr} \left( O_0 \hat{\rho}\right)^2 \right]$ and the variance bound readily follows from applying Eq.~\eqref{eq:3design} and utilizing  $\mathrm{tr} \left( \rho O^2_0\right) \leq \mathrm{tr}(O_0^2 ) \leq \mathrm{tr}(O^2)$. 
\end{proof}

\subsection{Median of means estimation}

Lemma~\ref{lem:unbiased} sets the stage for successful feature estimation via classical shadows. A single stabilizer sample, i.e.\ a classical shadow of size $N=1$, correctly predicts any linear feature in expectation. 
Convergence to this desired expectation value can be boosted by 
forming empirical averages of multiple independent repetitions.
The \emph{empirical mean} is the canonical example for such a procedure: perform $N$ independent stabilizer measurements and upon receiving outcomes $|s_1 \rangle, \ldots, |s_N \rangle$ set $\hat{\rho}=\tfrac{1}{N} \sum_{i=1}^N |s_i \rangle \! \langle s_i| - \mathbb{I}$. For context, we point out that this is just the linear inversion estimator associated with the stabilizer measurement \eqref{eq:stab_measurement} \cite{guta2018fast}. Subsequently, construct
\begin{equation}
\hat{o}(N,1) = \mathrm{tr} \left( O \hat{\rho} \right) = \tfrac{1}{N} \sum_{i=1}^N \left( (2^n+1) \langle s_i | O |s_i \rangle - \mathrm{tr}(O) \right), \label{eq:mean_estimator}
\end{equation}
in order to predict the value of a linear feature $\mathrm{tr}(O\rho)$. 
According to Lemma~\ref{lem:unbiased}, each summand is an independent random variable with bounded variance and correct expectation value. Convergence to the expectation value can be controlled by applying Chebyshev's inequality: 
$\mathrm{Pr} \left[ \left| \hat{o}(N,1) - \mathrm{tr}(O\rho) \right| > \epsilon \right] \leq \epsilon^{-2} \mathrm{Var}(\hat{o}(N,1)) \leq 3 \mathrm{tr}(O^2) / (N \epsilon^2)$.
In order to achieve a deviation probability of (at most) $\delta$, the number of samples must scale like $N = 3 \mathrm{tr}(O^2)/(\delta \epsilon^2)$. 
While the scaling in variance and accuracy is optimal, the dependence on $1/\delta$ is particularly bad and reflects outlier corruption. Individual contributions to \eqref{eq:mean_estimator} can assume exponentially large values. And, although rare, such contributions can completely distort the sample mean estimator.

\emph{Median of means} \cite{nemirovski1983medianmeans,jerrum1986medianmeans} 
is a conceptually simple trick that addresses this issue.
Instead of using all $N$ samples to construct a single empirical mean \eqref{eq:mean_estimator}, construct $K$ independent independent linear inversion estimators: $\hat{\rho}_{(k)} = \frac{2^n+1}{\lfloor N/K \rfloor} \sum_{i = (k-1) \lfloor N/K \rfloor+1}^{k\lfloor N/K \rfloor} U_i^\dagger | \hat{b}_i \rangle \! \langle \hat{b}_i | U_i - \mathbb{I}$.
Subsequently, estimate a linear feature $o (\rho) = \mathrm{tr}(O \rho)$ by setting
\begin{equation}
\hat{o}(N,K) = \mathrm{median} \left\{ \mathrm{tr} \left( O \hat{\rho}_{(1)} \right),\ldots,\mathrm{tr} \left( O \hat{\rho}_{(K)} \right) \right\}. \label{eq:median_estimator}
\end{equation}
This estimator is much more robust towards outlier corruption. Indeed, $| \hat{o}(N,K)-\mathrm{tr}(O \rho) | > \epsilon$ if an only if more than half of the empirical means individually deviate by more than $\epsilon$. The probability associated with such an undesirable event decreases exponentially with the number of batches $K$. The Chernoff inequality  \cite{chernoff1952inequality} captures this feature and establishes an exponential improvement over mean estimation in terms of failure probability.

\begin{theorem}[Median of means estimation] \label{thm:median_theorem}
Let $X$ be a random variable with mean $\mu$ and variance $\sigma^2$ and choose a number of batches $K$. Then, $N = 34K \sigma^2 /\epsilon^2$ independent copies of $X$ suffice to construct a median of means estimator $\hat{\mu}(N,K)$ that obeys
\begin{equation*}
\mathrm{Pr} \left[ \left| \hat{\mu}(N,K) - \mu \right| > \epsilon \right] \leq 2 \mathrm{e}^{-K/2} \quad \textrm{for all} \quad \epsilon >0.
\end{equation*}
\end{theorem}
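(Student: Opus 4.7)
My plan is a two-stage concentration argument. In stage one, Chebyshev's inequality controls each individual batch mean and yields only a constant-probability deviation bound. In stage two, this weak bound is boosted to an exponential concentration rate in the number of batches $K$ via a sharp Chernoff-type inequality applied to the binary events ``batch $k$ is bad.'' The constant $34$ in the hypothesis $N = 34K\sigma^2/\epsilon^2$ is chosen precisely so that the two stages dovetail and produce the clean rate $2\mathrm{e}^{-K/2}$.

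Concretely, I would split the $N$ samples into $K$ independent batches of size $\lfloor N/K\rfloor \geq 34\sigma^2/\epsilon^2$ and form empirical means $\hat{\mu}_{(1)},\ldots,\hat{\mu}_{(K)}$. Each is an unbiased estimator of $\mu$ with variance at most $\epsilon^2/34$, so Chebyshev's inequality supplies, for every $k$,
\begin{equation*}
\Pr\bigl[|\hat{\mu}_{(k)} - \mu| > \epsilon\bigr] \leq 1/34 =: p,
\end{equation*}
and these $K$ tail events are mutually independent.

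Next, observe that the median $\hat{\mu}(N,K)$ exceeds $\mu + \epsilon$ only if at least $\lceil K/2\rceil$ of the batch means do, and symmetrically for the lower tail $\mu - \epsilon$. Introducing Bernoulli indicators $Z_k^\pm = \mathbf{1}[\pm(\hat{\mu}_{(k)} - \mu) > \epsilon]$, each $\sum_k Z_k^\pm$ is stochastically dominated by a $\mathrm{Binomial}(K,p)$ random variable. Optimizing the Chernoff moment-generating bound over the tilt parameter $t$ (the optimum occurring at $\mathrm{e}^t = (1-p)/p$) yields the sharp one-sided estimate
\begin{equation*}
\Pr\!\left[\textstyle\sum_{k=1}^K Z_k^\pm \geq K/2\right] \leq \bigl(2\sqrt{p(1-p)}\bigr)^K.
\end{equation*}
With $p = 1/34$ one checks $2\sqrt{p(1-p)} = 2\sqrt{33}/34 \approx 0.338 < \mathrm{e}^{-1/2}$, so each tail is at most $\mathrm{e}^{-K/2}$; a union bound over the two tails delivers $2\mathrm{e}^{-K/2}$ as advertised.

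The argument is essentially routine; the only delicate point is calibrating the Chebyshev constant so that the Chernoff factor $2\sqrt{p(1-p)}$ sits below $\mathrm{e}^{-1/2}$. The value $34$ is generous (anything beyond roughly $10$ already suffices) but produces the cleanest final form. One could alternatively use Hoeffding's inequality in the second stage, but its exponent $2(1/2 - p)^2$ would force a substantially larger Chebyshev constant to reach the same rate $\mathrm{e}^{-K/2}$, so the sharp binomial-MGF bound above is the natural choice.
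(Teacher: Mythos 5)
Your argument is correct and is precisely the route the paper takes: the paper states Theorem~\ref{thm:median_theorem} without a written-out proof, but the surrounding discussion sketches exactly your two stages (Chebyshev on each batch mean, then a Chernoff bound on the event that at least half the batches deviate), deferring to the cited references of Jerrum et al.\ and Nemirovski--Yudin for details. Your verification of the constant --- that $p=1/34$ gives $2\sqrt{p(1-p)}=2\sqrt{33}/34<\mathrm{e}^{-1/2}$, with a union bound over the two tails supplying the factor $2$ --- correctly fills in what the paper leaves implicit.
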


\subsection{Feature prediction using classical shadows and median of means}
\label{sec:thmrestate}

According to Fact~\ref{fact:classical_shadow}, classical shadows are synonymous with independent repetitions of the stabilizer measurement \eqref{eq:stab_measurement}.
We choose to rephrase our first main result in this language to maintain coherence with the previous two subsections.

\begin{theorem}[Detailed restatement of Theorem~\ref{thm:main}]
\label{thm:mainrestate}
Fix a collection $O_1,\ldots,O_M$ of $2^n \times 2^n$ Hermitian matrices and set $\epsilon, \delta \in [0,1]$. Then, performing independent stabilizer measurements \eqref{eq:stab_measurement} on
$
N = 204 \log (2M/\delta)\max_i \mathrm{tr}(O_i^2)/\epsilon^2
$
 copies of an unknown $n$-qubit state $\rho$ suffice to accurately predict all function evaluations $\mathrm{tr}(O_i \rho)$ via median of means estimation \eqref{eq:median_estimator}: with probability at least $1-\delta$,
\begin{equation*}
\left| \hat{o}_i (N,\log (M/\delta)) - \mathrm{tr}(O_i \rho ) \right| \leq \epsilon \quad \textrm{for all} \quad 1 \leq i \leq M.
\end{equation*}
\end{theorem}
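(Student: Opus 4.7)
The plan is to obtain Theorem~\ref{thm:mainrestate} by combining the three ingredients already developed in the appendix: (i) Lemma~\ref{lem:unbiased}, which certifies that each single-shot estimator $\hat{o}(1,1) = \mathrm{tr}(O\hat{\rho})$ is unbiased and has variance at most $3\,\mathrm{tr}(O^2)$; (ii) Theorem~\ref{thm:median_theorem}, the generic median-of-means concentration bound; and (iii) a union bound over the $M$ target observables.

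First, I would fix an observable $O_i$ and note that, by Fact~\ref{fact:classical_shadow}, each of the $N$ classical-shadow samples gives rise to an independent copy of the random variable $\mathrm{tr}(O_i\hat{\rho})$, with mean $\mathrm{tr}(O_i\rho)$ and variance $\sigma_i^2 \leq 3\,\mathrm{tr}(O_i^2)$. Theorem~\ref{thm:median_theorem}, applied with batch count $K$ and accuracy $\epsilon$, then guarantees that the median-of-means estimator $\hat{o}_i(N,K)$ defined in Algorithm~\ref{alg:median_means} satisfies
\begin{equation*}
\Pr\!\left[\,|\hat{o}_i(N,K) - \mathrm{tr}(O_i\rho)| > \epsilon\,\right] \leq 2\mathrm{e}^{-K/2}
\end{equation*}
whenever $N \geq 34 K \sigma_i^2/\epsilon^2$. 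It is enough to use the uniform bound $\sigma_i^2 \leq 3\max_j \mathrm{tr}(O_j^2)$, so a single sample-count $N$ will work for every $i$ simultaneously.

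Next, I would calibrate $K$ to absorb the cost of a union bound over the $M$ observables. Choosing $K = 2\log(2M/\delta)$ drives the per-observable failure probability down to $2\mathrm{e}^{-K/2} = \delta/M$. A union bound over $i=1,\ldots,M$ then yields the simultaneous guarantee $|\hat{o}_i(N,K) - \mathrm{tr}(O_i\rho)| \leq \epsilon$ for every $i$ with probability at least $1-\delta$. Plugging this $K$ into the sample complexity from step one gives the advertised
\begin{equation*}
N \;\geq\; 34 \cdot 2\log(2M/\delta) \cdot 3\max_i \mathrm{tr}(O_i^2)/\epsilon^2 \;=\; 204\,\log(2M/\delta)\,\max_i \mathrm{tr}(O_i^2)/\epsilon^2,
\end{equation*}
which is exactly the bound in the statement.

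Because every substantive estimate has already been proved (Lemma~\ref{lem:unbiased} via the 2- and 3-design identities, and Theorem~\ref{thm:median_theorem} via Chernoff), the proof itself is essentially a bookkeeping exercise: I would only need to be careful that the batch count $K = \log(M/\delta)$ appearing in the statement of the theorem lines up with the $K = 2\log(2M/\delta)$ required by the concentration inequality (a harmless factor-of-two slack absorbed into the constant 204), and that the variance bound $3\,\mathrm{tr}(O_i^2)$ is used uniformly in $i$ so that one $N$ serves all predictions. The only genuine conceptual content — replacing a naive mean estimator (whose deviation is controlled only by Chebyshev and therefore scales as $1/\delta$) by median-of-means (exponential in $K$) — has already been isolated in Theorem~\ref{thm:median_theorem}, so I do not anticipate any real obstacle beyond verifying constants.
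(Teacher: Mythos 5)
Your proposal is correct and follows essentially the same route as the paper's own proof: invoke Lemma~\ref{lem:unbiased} for the mean and variance of each single-shot estimator, apply Theorem~\ref{thm:median_theorem} with $K = 2\log(2M/\delta)$ to get per-observable failure probability $\delta/M$, and finish with a union bound, yielding $N = 34 \cdot 2 \cdot 3 \log(2M/\delta)\max_i \mathrm{tr}(O_i^2)/\epsilon^2$. Your remark about reconciling the batch count $K=2\log(2M/\delta)$ with the $\log(M/\delta)$ written in the theorem statement is a fair observation about a minor notational slack that the paper's proof also leaves implicit.
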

\begin{proof}
According to Lemma~\ref{lem:unbiased}, each $\hat{o}_i (1,1) = (2^n+1) \langle s| O_i |s \rangle - \mathrm{tr}(O)$ is a random variable with mean $
\mu=\mathrm{tr}(O_i \rho)$ and variance $\sigma^2 \leq 3  \mathrm{tr}(O_i^2)$. 
In turn, Theorem~\ref{thm:median_theorem} asserts that the associated median of means estimator \eqref{eq:median_estimator} with $K= 2 \log (2M/ \delta)$ batches and a total of $N= 34 \times 2 \log (2M/ \delta) \times  3 \max_i \mathrm{tr}(O_i^2))/\epsilon^2$ samples obeys
\begin{equation*}
\mathrm{Pr} \left[ \left| \hat{o}_i (N,K) - \mathrm{tr}(O_i \rho ) \right| > \epsilon \right] \leq 2 \mathrm{e}^{-\log (2M/\delta)} = \delta/M.
\end{equation*}
Apply a union bound over all $M$ function predictions to complete the proof.
\end{proof}

Finally, we consider an improvement of Theorem~\ref{thm:mainrestate} that applies to the case of very many samples, i.e.\ $M >2^{2^n}$.  In this regime, it is better to perform feature prediction using a modified protocol. 

\begin{theorem}[Improvement of Theorem~\ref{thm:mainrestate}]
Instantiate notation from Theorem~\ref{thm:mainrestate} and set $B= \max_i \mathrm{tr}(O_i^2)$.
There exists a concrete protocol that uses the measurement result of independent stabilizer measurements \eqref{eq:stab_measurement} on
\begin{equation}
N = \mathcal{O}\Bigg(\min\Bigg(\frac{B\log(M)}{\epsilon^2}, \frac{B^2 D}{\epsilon^4} \log\Bigg(\frac{\sqrt{B}}{\epsilon}\Bigg), \frac{BD^2}{\epsilon^2} \log\Bigg(\frac{\sqrt{BD}}{\epsilon}\Bigg) \Bigg)\Bigg)
\label{eq:improved_size}
\end{equation}
copies of an unknown $n$-qubit state $\rho$ to accurately predict all function evaluations $\mathrm{tr}(O_i \rho)$ within $\epsilon$ error.
\end{theorem}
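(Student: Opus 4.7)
The three entries inside the minimum correspond to three different prediction protocols, and the full procedure simply selects whichever yields the smallest sample complexity in the regime at hand. Bound (1) is Theorem~\ref{thm:mainrestate} itself; bounds (2) and (3) become competitive precisely when $\log M$ dwarfs the intrinsic complexity of the Hilbert--Schmidt ball $\mathcal{O}_B := \{O \in \mathrm{Herm}(\mathbb{C}^D): \mathrm{tr}(O^2) \leq B\}$. The shared idea is: rather than running median-of-means observable-by-observable, estimate the state $\rho$ itself -- either on an $\epsilon$-net of observables, or globally in an appropriate matrix norm -- and reuse that single estimate for every target $O_i$.

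For bound (3), $BD^2/\epsilon^2\,\log(\sqrt{BD}/\epsilon)$, the plan is a Hilbert--Schmidt $\eta$-net reduction to Theorem~\ref{thm:mainrestate}. A standard volumetric estimate provides a minimal $\eta$-net $\mathcal{N}_\eta$ of $\mathcal{O}_B$ of cardinality $|\mathcal{N}_\eta| \leq (3\sqrt{B}/\eta)^{D^2}$. Choosing $\eta = \epsilon/2$ and applying the basic protocol to $\mathcal{N}_\eta$ with accuracy $\epsilon/2$ requires a classical shadow of size of order $BD^2\log(\sqrt{B}/\epsilon)/\epsilon^2$ and yields a simultaneous prediction for every net element. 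For a generic $O_i \in \mathcal{O}_B$, I would output the estimate attached to the nearest net element $O_{i_\star}$ and close the argument with a triangle inequality plus Cauchy--Schwarz,
\begin{equation*}
|\hat{o}(O_{i_\star}) - \mathrm{tr}(O_i \rho)| \leq |\hat{o}(O_{i_\star}) - \mathrm{tr}(O_{i_\star}\rho)| + \|O_i - O_{i_\star}\|_2 \|\rho\|_2 \leq \epsilon/2 + \epsilon/2,
\end{equation*}
using $\|\rho\|_2 \leq 1$. Tracking the induced classical-shadow dimension inside the logarithm refines the factor $\log(\sqrt{B}/\epsilon)$ to $\log(\sqrt{BD}/\epsilon)$.

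For bound (2), $B^2 D/\epsilon^4\,\log(\sqrt{B}/\epsilon)$, the $D^2$-sized net is wasteful and one should instead build a single global matrix estimator $\hat{\rho}$ from the linear inversion estimators $\hat{\rho}_{(k)}$ via a matrix-valued median-of-means (Minsker's geometric median, or a tournament construction), followed by projection onto the density-matrix simplex. The $2$-design identity~\eqref{eq:2design} gives explicit second-moment control: a direct computation yields $\mathbb{E}\|\hat{\rho}_{(1)} - \rho\|_2^2 \leq D^2$ and $\|\mathbb{E}[(\hat{\rho}_{(1)} - \rho)^2]\|_\infty = O(D)$. A matrix Bernstein or Chernoff bound then controls $\|\hat{\rho} - \rho\|$ in an appropriate norm as a function of $N$, $D$ and a logarithmic term. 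Pairing this with a H\"older-type inequality $|\mathrm{tr}(O_i(\hat{\rho} - \rho))| \leq \|O_i\|_p \|\hat{\rho} - \rho\|_q$ for a conjugate pair $(p,q)$ and balancing against $\sqrt{B}$ produces the advertised scaling, with a small $\log(\sqrt{B}/\epsilon)$-sized covering of the observable sphere closing the uniform bound.

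The main obstacle is the matrix analogue of Theorem~\ref{thm:median_theorem} needed in the last step: the scalar Chernoff argument underpinning median-of-means must be replaced by a matrix-valued concentration inequality, and tracking the precise interplay of $B$, $D$ and $\epsilon$ so that the exponent of $D$ drops from $2$ to $1$ (at the price of the worsened $\epsilon^{-4}$ rather than $\epsilon^{-2}$ dependence) requires a careful balance between per-sample variance, the worst-case spectral norm of a single $\hat{\rho}_{(k)}$, and the cost of projecting onto the state simplex without inflating the chosen norm. A secondary nuisance is ensuring that the log factors come out as $\log(\sqrt{B}/\epsilon)$ rather than a larger polynomial in $D$.
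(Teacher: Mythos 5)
Your handling of the first and third terms is sound. For the third term you take a genuinely different route from the paper: you place an $\eta$-net directly on the observable ball $\{O:\mathrm{tr}(O^2)\le B\}$ (log-cardinality $O(D^2\log(\sqrt{B}/\epsilon))$) and only need $O(\epsilon)$ accuracy per net point, whereas the paper nets the set of \emph{pure states} (log-cardinality $O(D\log(1/\eta))$), spectrally decomposes each $O_i=\sum_i w_i|\omega_i\rangle\!\langle\omega_i|$, estimates each net fidelity $\langle u|\rho|u\rangle$ to accuracy $\epsilon/\sqrt{BD}$, and recombines via Cauchy--Schwarz using $\sum_i|w_i|\le\sqrt{BD}$. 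Both yield $O(BD^2\log(\sqrt{BD}/\epsilon)/\epsilon^2)$, and yours is arguably the cleaner argument for this term (your worry about upgrading $\log(\sqrt{B}/\epsilon)$ to $\log(\sqrt{BD}/\epsilon)$ is moot, since the former is the smaller quantity and hence already within the claim).

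The genuine gap is the second term. Your plan---build one global matrix estimator $\hat{\rho}$ by matrix-valued median-of-means and close with H\"older---does not produce $B^2D/\epsilon^4$, and your own ``main obstacle'' paragraph senses this. The accounting is unforgiving: to guarantee $|\mathrm{tr}(O(\hat{\rho}-\rho))|\le\epsilon$ uniformly over the Hilbert--Schmidt ball via $\|O\|_2\,\|\hat{\rho}-\rho\|_2$ you need $\|\hat{\rho}-\rho\|_2\le\epsilon/\sqrt{B}$, and since $\E\|\hat{\rho}_{(1)}-\rho\|_2^2\approx D^2$ per sample, the averaged estimator only reaches this at $N\gtrsim BD^2/\epsilon^2$---i.e.\ you recover the third term, not the second; the trace-norm/operator-norm pairing fares no better because $\|O\|_1$ can be as large as $\sqrt{BD}$. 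The missing idea is \emph{spectral truncation of the observable}: discard all eigenvalues of $O$ with $|w_i|\le\epsilon/2$, which perturbs $\mathrm{tr}(O\rho)$ by at most $\epsilon/2$ and leaves at most $4B/\epsilon^2$ eigenvectors (since $\sum_i w_i^2\le B$). Each surviving eigenvector is snapped to a pure-state net of fineness $\eta=\epsilon^2/(8B)$, and each net fidelity $\langle u|\rho|u\rangle$---a rank-one observable of unit Hilbert--Schmidt norm---is estimated to accuracy $\sim\epsilon^2/B$ by the basic median-of-means protocol, at total cost $O(\log|\mathcal{C}_\eta|\cdot(B/\epsilon^2)^2)=O(B^2D\log(\sqrt{B}/\epsilon)/\epsilon^4)$; Cauchy--Schwarz gives $\sum_{|w_i|>\epsilon/2}|w_i|\le 2B/\epsilon$, so the recombined error stays below $\epsilon$. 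This is precisely how the exponent of $D$ drops from $2$ to $1$ at the price of $\epsilon^{-4}$: the net lives on pure states ($2D$ real parameters, not $D^2$) and the accuracy demanded per net point tightens from $\epsilon$ to $\epsilon^2/B$. No matrix concentration inequality or simplex projection is needed anywhere.
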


These improvements match with the rigorous lower bounds derived in Theorem~\ref{thm:main2restate} up to $\log$-factors.

\begin{proof}
The three terms in Eq.~\eqref{eq:improved_size} correspond to three variants of a quantum feature prediction protocol. The first term corresponds to the original protocol in Theorem~\ref{thm:mainrestate}.
The protocols corresponding to the second and third terms arise from covering all pure states in $\mathbb{C}^D$ with a net $\mathcal{C}_\eta$ of fineness $\eta$. I.e.\ for any state $| \psi \rangle$, there exists a net point $|u \rangle \in \mathcal{C}_\eta$ that obeys $\| |u \rangle \! \langle u| - | \psi \rangle \! \langle \psi| \|_1 \leq \eta$.
The fineness of this net distinguishes the two cases: we choose $\eta = \epsilon^2 /8 B$ for the second protocol and $\eta = \epsilon/(2 \sqrt{BD})$ for the third.
A volumetric counting argument asserts that the cardinality of the net obeys $|\mathcal{C}_\eta| \leq (1+2/\eta)^{2D}$.

Subsequently, we use feature prediction via classical shadows and median of mean to estimate target fidelities with the net points $\langle u| \rho |u \rangle$, not the actual targets. Theorem~\ref{thm:mainrestate} asserts that this requires $N= \mathcal{O}(\log (|\mathcal{C}_\eta|)/\eta^2)$ many stabilizer measurements. Inserting the volumetric bounds for both cases reproduces the second and third term in Eq.~\eqref{eq:improved_size}.
Subsequently, we approximate the actual linear features with linear combinations of net points. Let $\hat{u}_u$ denote the estimate for $\langle u| \rho |u \rangle$, where $u \in \mathcal{C}_\eta$.

For estimation protocol II, we fix $O$, apply an eigenvalue decomposition (PCA)  $O=\sum_{i=1}^D w_i | \omega_i \rangle \! \langle \omega_i|$ and set $O_\epsilon = \sum_{i:|w_i|>\epsilon/2} w_i | \omega_i \rangle \! \langle \omega_i|$. By construction, this truncation obeys $\left| \mathrm{tr}(O_\epsilon \rho) - \mathrm{tr}(O \rho) \right| \leq \epsilon /2$ for any underlying state. Next, we approximate each eigenvector $|\omega_i \rangle$ by the closest net-state $|u_i \rangle \in \mathcal{C}_\eta$ and set $\hat{o} = \sum_{i: |w_i|>\epsilon/2} w_i \hat{o}_{u_i}$. The particular choice of $\eta$ then ensures
\begin{equation*}
    \left| \hat{o} = \mathrm{tr}(O \rho) \right|\leq \left| \hat{o} - \mathrm{tr}(O_\epsilon \rho) \right|+ \frac{\epsilon}{2} \leq \sum_{i:|w_i| >\epsilon/2}|w_i| \frac{\epsilon^2}{4B}+\frac{\epsilon}{2} \leq \epsilon.
\end{equation*}
The last inequality combines Cauchy-Schwarz with the fact that there are at most $4B/\epsilon^2$ different eigenvalues that obey $|w_i| > \epsilon/2$.

In estimation protocol III, we simply omit the truncation to sufficiently large eigenvalues and set $\hat{o}=\sum_{i=1}^D w_i \hat{o}_{u_i}$. We can then use $\left| \langle \omega_i| \rho | \omega_i \rangle - \hat{o}_{u_i} \right| \leq \epsilon /\sqrt{BD}$ to conclude
\begin{equation*}
\left| \hat{o} - \mathrm{tr}(O \rho) \right| \leq \sum_i | w_i| \frac{\epsilon}{\sqrt{BD}} \leq \epsilon.
\end{equation*}
All three protocols are always valid. So, we can always choose the one with the smallest size $N$, and perform the corresponding protocol. This justifies the minimum in Eq.~\eqref{eq:improved_size}.
\end{proof}

\section{Proof of Theorem~\ref{thm:main2}}
\label{sec:proofthm2}

Before stating the content of the statement, we need to introduce some additional notation.
In quantum mechanics, the most general notion of a quantum measurement is a POVM (positive operator-valued measure). A $D$-dimensional POVM $F$ consists of a collection $F_1,\ldots,F_N$ of positive semidefinite matrices that sum up to the identity matrix: $\langle x|F_i|x \rangle \geq 0$ for all $x \in \mathbb{C}^D$ and $\sum_i F_i = \mathbb{I}$. 
The index $i$ is associated with different potential measurement outcomes and Born's rule asserts $\mathrm{Pr} \left[ i| \rho \right]=\mathrm{tr}(F_i \rho)$ for all $1 \leq i \leq M$ and any $D$-dimensional quantum state $\rho$.

\subsection{Detailed statement and proof idea}
\begin{theorem}[Detailed restatement of Theorem~\ref{thm:main2}]
\label{thm:main2restate}
Fix a sequence of POVMs $F^{(1)},\ldots,F^{(N)}$.
Suppose that given any $M$ features $0 \preceq O_1, O_2, \ldots, O_M \preceq I$ with $\max_i\Big(\norm{O_i}_2^2\Big) \leq B$, there exists a machine (with arbitrary runtime as long as it always terminates) that can use the measurement outcomes of $F^{(1)}, \ldots, F^{(N)}$ on $N$ copies of an unknown quantum state $\rho$ to $\epsilon$-accurately predict 
$$\Tr(O_1 \rho), \ldots, \Tr(O_M \rho)$$
with high probability.
Then, necessarily
$$N \geq \Omega \Bigg(\min\Bigg(\frac{B \log(M)}{\epsilon^2}, \frac{B^2 D}{\epsilon^4 }, \frac{B D^2}{\epsilon^2 } \Bigg) \Bigg).$$
\end{theorem}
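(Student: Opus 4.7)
The plan is to reduce $\epsilon$-accurate feature prediction to a state identification task and bound sample complexity through Fano's inequality. For each of the three candidate terms I build an ensemble $\{\rho_x\}_{x\in\mathcal{X}}$ of $D$-dimensional states together with observables $0\preceq O_1,\dots,O_M\preceq I$ satisfying $\Tr(O_i^2)\leq B$, chosen so that the feature vectors $\bigl(\Tr(O_i\rho_x)\bigr)_{i=1}^M$ of distinct labels are pairwise $2\epsilon$-separated. Any successful $\epsilon$-accurate predictor then doubles as a decoder $\hat{X}$ for a uniformly random label $X\in\mathcal{X}$ transmitted through $N$ independent copies of $\rho_X$ measured by the prescribed POVMs $F^{(1)},\dots,F^{(N)}$. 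Fano's inequality forces
\[
I(X;Y_1,\dots,Y_N)\;\geq\;\Omega\bigl(\log|\mathcal{X}|\bigr),
\]
where $Y_n$ is the outcome of $F^{(n)}$ applied to the $n$-th copy of $\rho_X$.

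Because the copies are independent, $I(X;Y_1,\dots,Y_N)\leq\sum_n I(X;Y_n)$, so it suffices to upper bound the information extracted by a single measurement. I would work with ensembles of the form $\rho_x=I/D+\eta\Delta_x$ where the $\Delta_x$ are traceless Hermitian perturbations with $\|\Delta_x\|_2\leq 1$ and $\E_x\Delta_x=0$. For such ensembles, the outcome distribution $q_x(k)=\Tr(F_k\rho_x)$ deviates from $\bar q(k)=\Tr(F_k)/D$ by $\eta\,\Tr(F_k\Delta_x)$, and the averaged chi-squared divergence $\E_x\,\chi^2(q_x\|\bar q)$ can be controlled by $O(\eta^2)$ through a Cauchy--Schwarz step together with the completeness relation $\sum_k F_k=I$, which absorbs the dimension factor that would otherwise appear; the inequality $\mathrm{KL}\leq\chi^2$ then gives $I(X;Y_n)=O(\eta^2)$, uniformly over all POVMs on $\mathbb{C}^D$. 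Fixing the smallest admissible perturbation scale $\eta=\Theta(\epsilon/\sqrt{B})$---dictated by $|\Tr(O_i(\rho_x-\rho_y))|\geq 2\epsilon$ together with Cauchy--Schwarz and $\|O_i\|_2\leq\sqrt{B}$---produces the master inequality $N\geq\Omega\bigl(B\log|\mathcal{X}|/\epsilon^2\bigr)$.

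The three terms in the statement correspond to three different ensembles. The first, with $|\mathcal{X}|=M$, uses $\Delta_x\propto O_x$ for a family of pairwise trace-orthogonal observables, so that only the $x$-th feature is non-negligible on $\rho_x$ and a correct prediction pinpoints $x$. The two dimension-dependent caps arise when $M$ is not the bottleneck: $\mathcal{X}$ is chosen as a maximally rich packing of admissible density matrices that cannot be told apart better than $\epsilon$. A full-rank packing of the Hermitian operator ball intersected with the state simplex yields $\log|\mathcal{X}|=\Omega(D^2)$ via a standard volumetric count, producing the $BD^2/\epsilon^2$ cap after absorbing the $B/\epsilon^2$ prefactor (this is morally the classical independent-measurement tomography lower bound, rescaled to the $\sqrt{B}$ budget). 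A rank-restricted packing, with the rank tuned to keep the associated separating observables inside the $\sqrt{B}$ Hilbert--Schmidt budget, delivers the intermediate $B^2D/\epsilon^4$ cap. Taking the minimum of the three constructions recovers the stated bound.

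The main obstacle is the per-measurement information bound: it must hold uniformly over \emph{all} POVMs on $\mathbb{C}^D$, regardless of adaptivity, outcome alphabet, or rank structure, with a constant independent of $D$. A naive Holevo estimate $\chi\leq\eta^2 D$ is too loose by a factor of $D$ relative to what the first-term packing demands; the remedy is to exploit the centred structure of the ensemble and use the averaged chi-squared identity, where the POVM completeness relation $\sum_k F_k=I$ absorbs the dimension, rather than bounding the information for each fixed $x$ in isolation. A secondary technical obstacle is calibrating the rank and radius of the packings for the dimension-dependent caps so that the perturbed states remain bona fide density matrices \emph{and} the distinguishing observables fit inside the Hilbert--Schmidt ball of radius $\sqrt{B}$; over-eager choices on either side collapse the construction.
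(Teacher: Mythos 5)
Your reduction to state identification via Fano, the tensorization $I(X;Y_1,\dots,Y_N)\le\sum_n I(X;Y_n)$, and the three-ensemble structure all mirror the paper. But there is a genuine gap at exactly the point you flag as the main obstacle: the claim that for an ensemble $\rho_x=I/D+\eta\Delta_x$ with $\E_x\Delta_x=0$, the averaged chi-squared divergence is $O(\eta^2)$ uniformly over all POVMs, with the completeness relation absorbing the dimension. Refine the POVM to rank-one elements $F_k=w_kD\ket{v_k}\bra{v_k}$ with $\sum_k w_k=1$; then
\begin{equation*}
\E_x\,\chi^2(q_x\|\bar q)\;=\;\eta^2\sum_k w_k\,D^2\,\E_x\big[\langle v_k|\Delta_x|v_k\rangle^2\big],
\end{equation*}
so the completeness relation only helps if $\E_x\langle v|\Delta_x|v\rangle^2=O(1/D^2)$ for \emph{every} unit vector $v$, i.e.\ if the ensemble is isotropic. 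For your first-term construction with $M$ pairwise trace-orthogonal $\Delta_x$ of unit Hilbert--Schmidt norm, Bessel's inequality only gives $\E_x\langle v|\Delta_x|v\rangle^2\le 1/M$, hence $\E_x\chi^2\le\eta^2D^2/M$, which is $O(\eta^2)$ only when $M\gtrsim D^2$. Concretely, take $M=2$, $\Delta_1=-\Delta_2$ proportional to (the traceless part of) a rank-$B$ projector $P$, and the two-outcome POVM $\{P,\,I-P\}$: a single shot already yields $\Theta(\min(1,\eta^2D/B)\cdot B)$ --- i.e.\ $\Theta(\eta^2 D)$ --- nats, not $O(\eta^2)$. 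A fixed POVM aligned with a small, fixed codebook defeats the bound, and the term $B\log(M)/\epsilon^2$ cannot be extracted this way for $M\ll D^2$.

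The paper closes this gap with an ingredient your proposal lacks: a Haar-random unitary $U$ (the ``Loki'' device) that conjugates the code states \emph{after} Alice encodes but \emph{before} Bob measures, and is revealed to Bob only after the data-acquisition phase. Conditioning on $U$, the effective ensemble seen by any fixed POVM is isotropic by construction, and the second-moment formula $\E_U\langle v|U\Delta U^\dagger|v\rangle^2=\Tr(\Delta^2)/(D(D+1))$ gives $I(X:F_k\text{ on }U\rho_XU^\dagger\,|\,U)\le\alpha^2/r=O(\epsilon^2/B)$ for \emph{every} POVM and every (even two-element) codebook. Decodability survives because the hypothesis grants a machine that works for arbitrary features within the Hilbert--Schmidt budget, so Bob simply queries it on the rotated observables $UO_iU^\dagger$ once $U$ is revealed; the chain $I(X:\overline X)\le I(X:\overline X\,|\,U)\le I(X:Y\,|\,U)$ (using $I(X:U)=0$) then completes the argument. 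Without this conditioning step --- or some substitute that makes the ensemble isotropic while keeping the decoder well-defined --- your first term does not go through, and the calibration of the two dimension-dependent caps inherits the same problem whenever the truncated codebook size $M'$ falls below $D^2$.
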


We emphasize that this fundamental lower  bound exactly matches the constructive results in Theorem~\ref{thm:mainrestate}.
Feature prediction with classical shadows is sample-optimal among any prediction procedure based on agnostic measurements.

\paragraph*{Proof idea:} We adapt a versatile proof technique for establishing information-theoretic lower bounds on tomographic procedures that is originally due to Flammia et al.~\cite{flammia2012quantum}, see also  \cite{haah2017sample,roth2019recovering} for adaptations and refinements.
The key idea is to integrate an idealized prediction procedure into a quantum communication protocol. Alice encodes a classical bit-string into a quantum state and subsequently sends independent copies of this code-state to Bob. 
Bob then uses the device as a black-box to decode and retrieve the original bit-string. Intuitively, one expects a relation between message length $\log_2 (M)$ (the size of the communicated bit-string) and the number of code state copies $N$ that Bob receives: More state copies allow for a more accurate reconstruction facilitating the use of larger code-books comprised of more densely packed code-states. Techniques from information theory allow for making this intuition precise.
The result is a fundamental relation between the number of quantum state samples $N$ and $M$ -- the number of linear features that any black box device can predict up to a certain accuracy.

In order to obtain strong bounds, we equip this communication scenario with a twist that pinpoints an essential feature of linear feature prediction with classical shadows: the actual measurements are completely independent from the features one aims to predict. 
This twist is personified by a third party -- we call them Loki\footnote{In Norse mythology, Loki is infamous for mischief and trickery. However, not entirely malicious, he often comes around and tries to remedy dire consequences of his actions in the last minute.} -- who tampers with the quantum communication protocol. More precisely, he takes Alice's code states and randomly rotates them before presenting them to Bob (Loki's mischief). 
Bob is then forced to perform his quantum measurements on rotated quantum states. Only after the data acquisition phase is completed, Loki confides in Bob and provides him with a full classical description of the unitary he applied earlier (Loki's redemption). 
We refer to Figure~\ref{fig:commillus} for an illustration.

We emphasize that quantum feature prediction with classical shadows can cope with Loki's action (simply undo Loki's transformation post measurement by instructing the machine to predict a rotated set of linear features), while most existing approaches cannot. 
Therefore it should not come as a surprise that linear feature prediction with agnostic quantum measurements -- like classical shadows -- is constrained by more stringent information-theoretic bounds than methods that attune measurements to predictions  -- like shadow tomography \cite{aaronson2018shadow,aaronson2019gentle}.

\begin{figure}[t]
    \centering
    \includegraphics[width=1.0\textwidth]{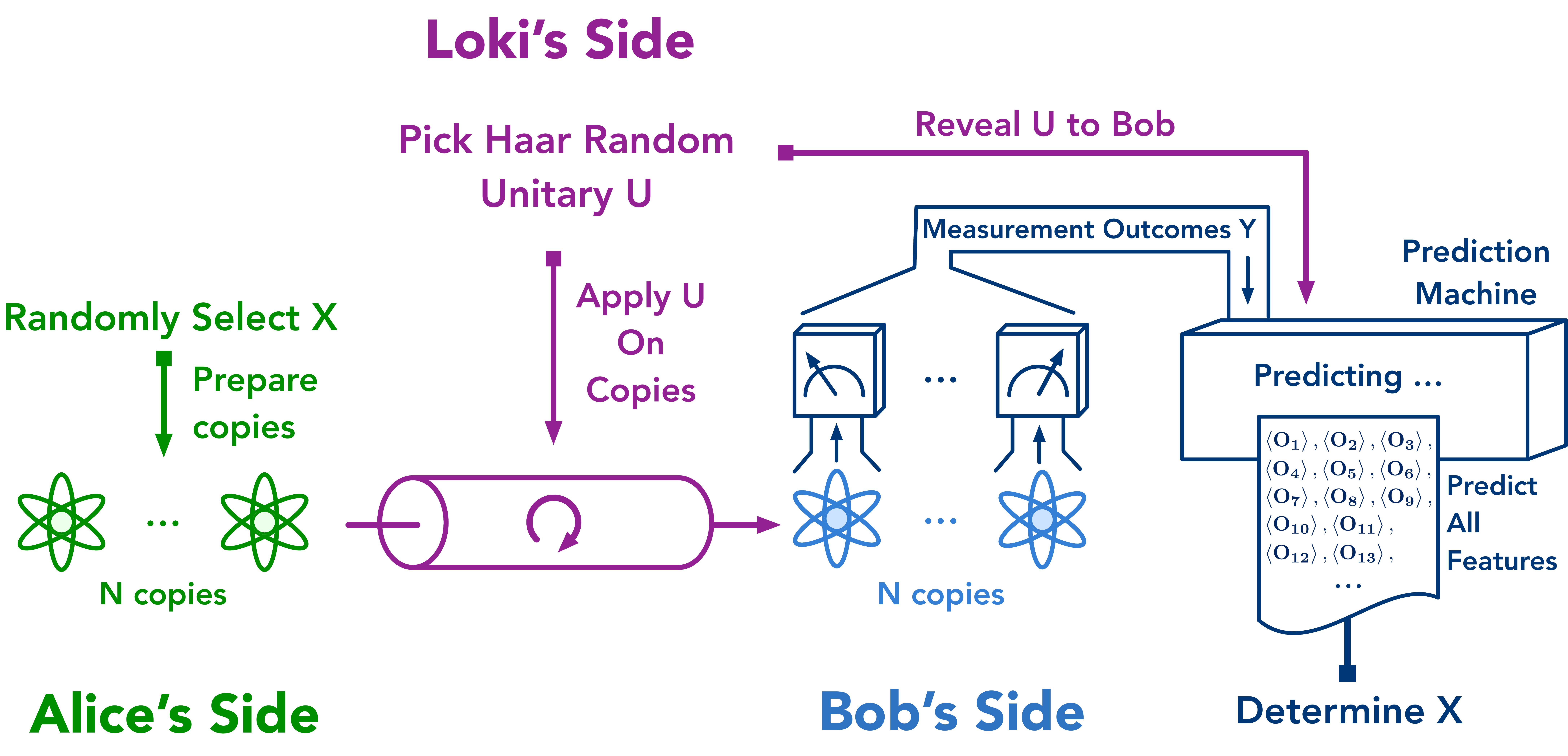}
    \caption{\emph{Illustration of the communication protocol behind Theorem~\ref{thm:main2restate}:}
    Two parties -- Alice and Bob -- devise a protocol that allows them to communicate classical bit strings: Alice encodes a bit string $X$ in a quantum state and sends $N$ independent copies of the state to Bob. Bob performs quantum measurements and uses a black box device (e.g.\ classical shadows) to decode Alice's original message. An unpredictable trickster -- Loki -- tampers with this procedure by randomly rotating Alice's quantum states en route to Bob. He only reveals his actions after Bob has completed the measurement stage of his protocol.}

    \label{fig:commillus}
\end{figure}

\subsection{Description of the communication protocol}

For now, assume $M \leq \exp(D / 32)$ and show how Alice can communicate any integer in $\{1, \ldots, M\}$ to Bob.
Alice and Bob first agree on a codebook for encoding any integer selected from $\{1, \ldots, M\}$ in a quantum state of dimension $D$.
The quantum states in the codebook are $\rho_1, \ldots, \rho_M$.
Alice and Bob also agree on a set of linear features $O_1, \ldots, O_M$ that satisfies
$$\Tr(O_i \rho_i) \geq \max_{j \neq i} \Tr(O_j \rho_i) + 3\epsilon.$$
Therefore $O_i$ can be used to identify individual code states.
The communication protocol between Alice and Bob is now apparent:
\begin{enumerate}
    \item Alice randomly selects an integer $X$ from $\{1, \ldots, M\}$.
    \item Alice prepares $N$ copies of the code-state $\rho_X$ according associated to $X$ and sends them to Bob.
    \item Bob performs POVMs $F^{(i)}$ on individual states and receives a string of measurement outcomes $Y$. 
    \item Bob inputs $Y$ into the feature prediction machine to estimate $\Tr(O_1 \rho_X), \ldots, \Tr(O_M \rho_X)$.
    \item Bob finds $\overline{X}$ that has the largest $\Tr(O_{\overline{X}} \rho_X)$.
\end{enumerate}

The working assumption is that the feature prediction machine can estimate $\Tr(O_1 \rho_X), \ldots, \Tr(O_M \rho_X)$ within $\epsilon$-error and high success probability.
This in turn ensures that this plain communication protocol is mostly successful, i.e.\ 
$\overline{X} = X$ with high probability.
In words:  Alice can transmit information to Bob, when no adversary is present.

We now show how they can still communicate safely in the presence of an adversary -- Loki -- that randomly rotates the transmitted code-states en route: $\rho_X \mapsto U \rho_X U^\dagger$ and $U$ is a Haar-random unitary.

This random rotation affects the measurement outcome statistics associated with the fixed POVMs $F^{(1)},\ldots,F^{(N)}$. Each element of $Y=\left[Y^{(1)},\ldots,Y^{(N)}\right]$ is now a random variable that depends on both $X$ and $U$.
After Bob has performed the quantum measurements to obtain $Y$, the adversary confesses to Bob and reveals the random unitary~$U$.
While Bob no longer has any copies of $\rho_X$, he can still incorporate precise knowledge of $U$ by instructing the machine to predict linear features $U O_1 U^\dagger, \ldots, U O_M U^\dagger$ instead of the original $O_1, \ldots, O_M$.
This reverses the effect of the original unitary transformation, because $\Tr(U O_i U^\dagger U \rho_X U^\dagger) = \Tr(O_i \rho_X)$.
This modification renders the original communication protocol stable with respect to Loki's actions. Alice can still send any integer in $\{1, \ldots, M\}$ to Bob with high probability.

\subsection{Information-theoretic analysis}
\label{sec:infotheoanalysis}

The following arguments are based on basic concepts from information theory. We refer to standard textbooks for details.

The communication protocol is guaranteed to work with high probability, ensuring that Bob's recovered message $\hat{X}$ equals Alice's input $X$ with high probability. Moreover, we assume that Alice selects her message uniformly at random. Fano's inequality then implies
$$I(X : \overline{X}) = H(X) - H(X | \overline{X}) \geq \Omega(\log(M)),$$
where $I(X : \overline{X})$ is the mutual information, and $H(X)$ is the Shannon.
By assumption, Loki chooses the unitary roatation $U$ uniformly at random, regardless of the message $X$. This implies $I(X:U)=0$ and, in turn
$$I(X : \overline{X}) \leq I(X : \overline{X}, U) = I(X : U) + I(X : \overline{X} | U) = I(X : \overline{X} | U).$$
For fixed $U$, $\overline{X}$ is the output of the machine that only takes into account the measurement outcomes $Y$. The data processing inequality then implies
$$I(X : Y | U) \geq I(X : \overline{X} | U) \geq I(X : \overline{X}) \geq \Omega(\log(M)).$$
Recall that $Y$ is the measurement outcome of the $N$ POVMs $F_1, \ldots, F_N$. We denote the measurement outcome of $F_k$ as $Y_k$.
Because $Y_1, \ldots, Y_N$ are random variables that depend on $X$ and $U$,
\begin{align*}
    I(X : Y | U) & = H(Y_1, \ldots, Y_N | U) - H(Y_1, \ldots, Y_N | X, U)\\
    & \leq H(Y_1 | U) + \ldots + H(Y_N | U) - H(Y_1, \ldots, Y_N | X, U)\\
    & = \sum_{k=1}^N \Big( H(Y_k | U) - H(Y_k | X, U) \Big) = \sum_{k=1}^N I(X: F_k \mbox{ on } U\rho_X U^\dagger | U).
\end{align*}
The second to last equality uses the fact that when $X, U$ are fixed, $Y_1, \ldots, Y_N$ are independent.
To obtain the best lower bound, we want $I(X: F_k \mbox{ on } U\rho_X U^\dagger | U)$ to be as small as possible.
More precisely, we demand
\begin{equation}
\label{eq:codestates}
I(X: F_k \mbox{ on } U\rho_X U^\dagger | U) \leq \frac{36 \epsilon^2}{B}, \forall k,
\end{equation}
and are going to justify this relation in the following section. 
Assuming that this relation holds, we have established a connection between $M$ and $N$: $\Omega(\log(M)) \leq I(X: Y | U) \leq 36 N \epsilon^2 / B$ and, therefore,
$N \geq \Omega\Big(B\log(M) / \epsilon^2\Big)$.
This establishes the claim in Theorem~\ref{thm:main2restate} for the case $M \leq \exp(D / 32)$.

\subsection{Detailed construction of quantum encoding and linear prediction decoding}
We now construct a codebook $\rho_1, \ldots, \rho_M$ and linear features $0 \preceq O_1, O_2, \ldots, O_M \preceq I$ with $\max_i\Big(\norm{O_i}_2^2\Big) \leq B$ that obey two key properties:
\begin{enumerate}
    \item the code states $\rho_1,\ldots,\rho_M$ obey the technical requirement displayed in Eq.~\eqref{eq:codestates} holds.
    \item the linear features $O_1,\ldots,O_M$ are capable of identifying concrete code states:
    \begin{equation}
    \label{eq:decoding}
    \Tr(O_i \rho_i) \geq \max_{j \neq i} \Tr(O_j \rho_i) + 3\epsilon \quad \textrm{for all} \quad 1 \leq i \leq M.
    \end{equation}
\end{enumerate}
The second condition requires each $\rho_i$ to be distinguishable from $\rho_1, \ldots, \rho_M$ via linear features $O_i$.
The first condition, on the contrary, requires $\rho_X$ to convey as little information about $X$ as possible.
The general idea would then be to create distinguishable quantum states that are, at the same time, very similar to each other.

In order to achieve these two goals,
we choose $M$ (recall that we focus on the case $M \leq \exp(D / 32)$) rank-$B/4$ subspace projectors that obey $\Tr(P_i P_j) / r < 1 / 2$ for all $i \neq j$.
The probabilistic method asserts that such a projector configuration exists, see
Lemma~\ref{lem:probabexist} below. Now, we set
$$\rho_i = (1 - 3\epsilon) \frac{I}{D} + 3 \epsilon \frac{4 P_i}{B}, \quad \textrm{and} \quad  O_i = 2 P_i, \quad \textrm{for all} \quad 1 \leq i \leq M.
$$
It is easy to check that this construction meets the requirement displayed in Eq.~\eqref{eq:decoding}. 
The other condition -- Eq.~\eqref{eq:codestates} is verified in 
Lemma~\ref{lem:randbound} below.
Along with the information-theoretic analysis provided earlier, this concludes the proof of the lower bound for the case where $M \leq \exp(D / 32)$ (case 1).

We now consider the case where $B / 9\epsilon^2 \leq D$ (case 2) and $B / 9\epsilon^2 > D$ (case 3). 
For these cases, we modify the codebook size:
\begin{align*}
\textrm{case 2:}\quad  & M' = \exp\Bigg(\min\Bigg(\log(M), \frac{B D}{1152\epsilon^2}\Bigg)\Bigg),\\ 
\textrm{case 3:} \quad & M' = \exp\Bigg(\min\Bigg(\log(M), \frac{ D^2}{128}\Bigg)\Bigg).
\end{align*}
These adjustments have a pragmatic reason: since $M$ is very large, we may we may not be able to find enough desirable codestates using the probabilistic method.
From Lemma~\ref{lem:probabexist}, we can only find $M' \leq M$ rank-$r$ subspace projectors such that $\Tr(P_i P_j) / r < 1/2, \forall i \neq j$, where $r = B / 36 \epsilon^2$ for case 2 and $r = D / 4$ for case 3.
The quantum states in the codebook and the corresponding linear features are
\begin{align*}
  & \rho_i = \frac{36\epsilon^2}{B} P_i, && \hat{O}_i = 6\epsilon P_i, & \mbox{ for case 2,}\\
  & \rho_i = \Bigg(1 - 3\epsilon \sqrt{\frac{D}{B}}\Bigg) \frac{I}{D} + 3\epsilon \sqrt{\frac{D}{B}} \frac{P_i}{D / 4}, && \hat{O}_i = 2\sqrt{\frac{B}{D}} P_i, & \mbox{ for case 3.}
\end{align*}
Using Lemma~\ref{lem:probabexist} and \ref{lem:randbound}, we can see that the two desired properties of the codebook are both satisfied.
Following the same information-theoretic analysis using a different codebook, we have
$N \geq \Omega\Big(\min\Big(B\log(M) / \epsilon^2, B^2D /\epsilon^4 \Big)\Big)$ when $B / 9\epsilon^2 \leq D$ and $N \geq \Omega\Big(\min\Big(B\log(M) / \epsilon^2, BD^2 / \epsilon^2 \Big)\Big)$ when $B / 9\epsilon^2 > D$.
Together, we have
$$N \geq \Omega \Bigg(\min\Bigg(\frac{B \log(M)}{\epsilon^2}, \frac{B^2 D}{\epsilon^4 }, \frac{B D^2}{\epsilon^2 } \Bigg) \Bigg).$$
Actually, case 2 and 3 alone already yield the desired result, but we walk through the proof focusing on case 1 because it puts an emphasis on $M \leq \exp(D / 32) = \exp(2^n / 32)$, which is the most relevant case.

\begin{lemma}
\label{lem:probabexist}
If $M \leq \exp(rD / 32)$ and $D \geq 4r$, then there exists $M$ rank-$r$ subspace projectors $P_1, \ldots, P_M$ such that
$$\Tr(P_i P_j) / r < 1 / 2, \forall i \neq j.$$
\end{lemma}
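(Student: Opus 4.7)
The plan is to invoke the probabilistic method: draw the $M$ rank-$r$ projectors independently at random from the uniform measure on the Grassmannian $\mathrm{Gr}(r,D)$, and show that with positive probability no pair of projectors has too large an overlap. Concretely, write $P_i = U_i \Pi_0 U_i^\dagger$, where $\Pi_0$ is an arbitrary but fixed rank-$r$ projector and $U_1,\ldots,U_M$ are independent Haar-random unitaries on $\mathbb{C}^D$.

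First I would compute the expected overlap. Conditioning on $P_i$ and averaging over $U_j$ using the identity $\mathbb{E}_{U_j}[U_j \Pi_0 U_j^\dagger] = (r/D)\,\mathbb{I}$, one obtains $\mathbb{E}\bigl[\Tr(P_i P_j)\bigr] = r^2/D$. The hypothesis $D \geq 4r$ then ensures $\mathbb{E}[\Tr(P_i P_j)] \leq r/4$, so the desired condition $\Tr(P_i P_j) < r/2$ amounts to a one-sided deviation of at most $r/4$ from the mean.

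Second, I would upgrade this in-expectation statement to a concentration bound via Levy's inequality on the unitary group. Fix $P_i$ and consider the scalar function $f(U) = \Tr\bigl(P_i\, U \Pi_0 U^\dagger\bigr)$. A short calculation expanding $U \Pi_0 U^\dagger - V \Pi_0 V^\dagger$ and using $\|P_i\|_\infty \leq 1$ together with $\|\Pi_0\|_F = \sqrt{r}$ shows that $f$ is $L$-Lipschitz with respect to the Frobenius metric with $L = O(\sqrt{r})$. Levy's inequality for the Haar measure on $U(D)$ then yields a subgaussian tail of the form
\begin{equation*}
\Pr\bigl[\,\Tr(P_i P_j) \geq r/2\,\bigr] \;\leq\; \Pr\bigl[\,|f - \mathbb{E}f| \geq r/4\,\bigr] \;\leq\; 2\exp(-c\,D r)
\end{equation*}
for a universal constant $c > 0$.

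Finally, a union bound over all $\binom{M}{2} < M^2$ pairs implies that, with probability at least $1 - 2M^2 \exp(-c D r)$, the random construction delivers projectors satisfying $\Tr(P_i P_j) / r < 1/2$ for every $i \neq j$. Whenever this failure probability is strictly below one, at least one realization meets all pairwise constraints and the desired configuration exists; this holds for $M \leq \exp(r D /32)$ provided the absolute constant in Levy's inequality is tracked carefully. The main technical obstacle is therefore not the overall strategy, which is standard, but rather pinning down the numerical constant: either by a careful bookkeeping of the Lipschitz constant and the precise exponent in Levy's inequality on $U(D)$, or alternatively by a direct second-moment computation via Weingarten calculus followed by a Bernstein-type bound on $\Tr(P_i P_j)$ viewed as a bounded sum of concentrated terms. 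Either route yields the required scaling $\log M = \Omega(r D)$ and hence the claim.
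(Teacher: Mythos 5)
Your overall strategy coincides with the paper's: both draw the $M$ rank-$r$ projectors independently from the unitarily invariant measure on the Grassmannian, bound the tail probability $\Pr[\Tr(P_iP_j)\ge r/2]$ for a fixed pair, and finish with a union bound over the fewer than $M^2$ pairs. Your computation of the mean, $\mathbb{E}[\Tr(P_iP_j)]=r^2/D\le r/4$ under $D\ge 4r$, and your Lipschitz bound $L=O(\sqrt r)$ for $f(U)=\Tr(P_i\,U\Pi_0U^\dagger)$ are both correct (modulo the standard technicality that Levy-type concentration is stated for $SU(D)$ or the projective unitary group rather than $U(D)$ itself; since $f$ is phase-invariant this is harmless). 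The genuine divergence is in the concentration tool, and it is exactly where your hedge about constants bites. The paper invokes a sharp large-deviation estimate for subspace overlaps (Lemma~6 of Haah et al.), which gives $\Pr[\Tr(P_iP_j)/r\ge 1/2]\le\exp\bigl(-r^2 f(D/(2r)-1)\bigr)<\exp(-rD/16)$ with $f(z)=z-\log(1+z)$; the exponent $rD/16$ is precisely what makes $M^2\exp(-rD/16)\le 1$ equivalent to the stated threshold $M\le\exp(rD/32)$. Generic Levy concentration with deviation $t=r/4$ and $L=2\sqrt r$ yields an exponent $c\,Dt^2/L^2=c\,Dr/64$ where the universal constant $c$ in standard formulations is at most $1/2$ (and often $1/12$ or worse), so you land at an exponent no better than $Dr/128$. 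That proves the lemma only with a larger constant in place of $32$, e.g.\ $M\le\exp(rD/256)$. For the downstream purpose — an $\Omega(\cdot)$ lower bound — this is immaterial, but as a proof of the lemma as literally stated it falls short; to recover the constant $32$ you need the sharper Grassmannian tail bound (or your proposed alternative of a direct moment computation pushed to a genuine large-deviation estimate), not off-the-shelf Levy.
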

\begin{proof}
We find the subspace projectors using probabilistic argument. We randomly choose $M$ rank-$r$ subspaces according to the unitarily invariant measure in the Hilbert space, the Grassmannian, and bound the probability that the randomly chosen subspaces do not satisfy the condition.
For a pair of fixed $i\neq j$, we have
$$\Pr\Bigg[\frac{1}{r} \Tr(P_i P_j) \geq \frac{1}{2}\Bigg] \leq \exp\Bigg(- r^2 f\Bigg(\frac{D}{2r} - 1\Bigg)\Bigg) < \exp\Bigg(- \frac{rD}{16}\Bigg),$$
where we make use of  \cite[Lemma~6]{haah2017sample} in the first inequality and $f(z) = z - \log(1+z) > z/4$ for all $z \geq 1$ in the second inequality.
A union bound then asserts
$$\Pr\bigg[\exists i \neq j, \frac{1}{r} \Tr(P_i P_j) \geq \frac{1}{2}\bigg] < M^2 \exp\Bigg(- \frac{rD}{16} \Bigg) \leq 1.$$
Because the probability is less than one, there must exist $P_1, \ldots, P_{M'}$ that satisfy the desired property.
\end{proof}

\begin{lemma}
\label{lem:randbound}
For any POVM measurement $F$, let $\vec{F}$ be the vector of POVM elements in $F$. And consider a set of quantum states $\{\rho_1, \ldots, \rho_M\}$ such that $\rho_i = (1-\alpha) \frac{I}{D} + \alpha \frac{P_i}{r}$, where $P_i$ is a rank-$r$ subspace projector. Consider $U$ sampled from Haar measure, and $X$ sampled from $\{1, \ldots, M\}$ uniformly at random, then the accessible information of $X$ from the POVM measurement $F$ on $U\rho_X U^\dagger$ conditioned on $U$ satisfies
$$I(X: F \mbox{ on } U\rho_X U^\dagger | U) \leq \frac{\alpha^2}{r}.$$
In case 1 $(M \leq \exp(D / 32))$, $\alpha = 3 \epsilon, r = B / 4$.
In case 2 $(B/9\epsilon^2 \leq D)$, $\alpha = 1, r = \frac{B}{36\epsilon^2}$. In case 3 $(B/9\epsilon^2 > D)$, $\alpha = 3\epsilon \sqrt{D / B}, r = D / 4$.
\end{lemma}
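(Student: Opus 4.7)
I plan to upper-bound the conditional mutual information by a chi-squared divergence against a cleverly chosen reference distribution, and then reduce the Haar average to a second-moment computation. The enabling observation is that, for every reference distribution $q$ that does not depend on $X$,
\begin{equation*}
\E_{X}\!\left[ D_{\mathrm{KL}}\!\left(p(\cdot|X,u)\,\|\,q\right)\right] = I(X:Y|U=u) + D_{\mathrm{KL}}\!\left(p(\cdot|u)\,\|\,q\right)\geq I(X:Y|U=u),
\end{equation*}
which follows by adding and subtracting $\log p(y|u)$ inside the expectation. Combined with $D_{\mathrm{KL}}(p\|q)\leq \chi^{2}(p,q)=\sum_y(p_y-q_y)^{2}/q_y$, this gives $I(X:Y|U=u)\leq \E_{X}\chi^{2}(p(\cdot|X,u),q)$ for any fixed $q$. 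The natural choice is $q(y)=\Tr(F_y)/D$, i.e.\ the outcome distribution the maximally mixed state would produce: it is the ``noise floor'' the rank-$r$ signal $\alpha P_X/r$ must outshine, and crucially it does not depend on $u$.

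The second step is to reduce to rank-one POVMs. A spectral refinement $F_y=\sum_k\lambda_{y,k}|v_{y,k}\rangle\langle v_{y,k}|$ turns any POVM into a rank-one POVM on a finer outcome alphabet, and coarse-graining cannot increase mutual information, so I may assume $F_y=\lambda_y|v_y\rangle\langle v_y|$ with $\sum_y\lambda_y=D$. A short calculation then gives
\begin{equation*}
\frac{\bigl(p(y|X,u)-q(y)\bigr)^{2}}{q(y)}= \alpha^{2} D\lambda_y \left(\frac{1}{r}\langle v_y|uP_Xu^\dagger|v_y\rangle - \frac{1}{D}\right)^{2}.
\end{equation*}

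The final step is to average over $U$. Using the Haar second-moment formula, for any unit vector $v$ and rank-$r$ projector $P$ one has $\E_U\langle v|UPU^\dagger|v\rangle = r/D$ and
\begin{equation*}
\Var\!\bigl[\langle v|UPU^\dagger|v\rangle\bigr] = \frac{r(D-r)}{D^{2}(D+1)}.
\end{equation*}
Inserting this and summing over $y$ via $\sum_y\lambda_y=D$ yields $\E_U\chi^{2}(p(\cdot|X,u),q)\leq \alpha^{2}(D-r)/(r(D+1))\leq \alpha^{2}/r$. Because the bound depends on $P_X$ only through its rank, it is uniform in $X$, so Fubini concludes $I(X:Y|U)\leq \alpha^{2}/r$. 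A final sanity check confirms that each of the three $(\alpha,r)$ pairs listed in the lemma evaluates $\alpha^{2}/r$ to $36\epsilon^{2}/B$, matching the requirement \eqref{eq:codestates}.

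The main obstacle I anticipate is selecting the reference $q$. The canonical choice $q=p(\cdot|u)$ reproduces the mutual information on the nose but produces a Haar integral of a ratio whose denominator depends on $u$ through $uP_Xu^\dagger$, which is analytically hard. Paying the automatically non-negative $D_{\mathrm{KL}}(p(\cdot|u)\,\|\,q)$ slack in exchange for the $u$-independent denominator $\Tr(F_y)/D$ is what turns the problem into a one-line second-moment calculation. The rank-one reduction and the Haar second-moment formula are both standard, so no further technical difficulty should arise.
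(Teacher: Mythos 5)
Your proposal is correct and takes essentially the same route as the paper: your reference distribution $q(y)=\Tr(F_y)/D$ is exactly the $X$- and $U$-averaged outcome distribution, so the variational bound followed by $D_{\mathrm{KL}}\leq\chi^2$ reproduces the paper's chain $I(X:Y|U)\leq I(X,U:Y)\leq\sum_y \Var_U[p_y]/\E_U[p_y]$ (the paper justifies the last step by the tangent-line bound on $\log$, which is the same inequality). The rank-one refinement, the Haar second-moment computation, and the final value $\alpha^2(D-r)/(r(D+1))\leq\alpha^2/r$ all coincide with the paper's proof.
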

\begin{proof}
First of all, let us decompose all POVM elements $\{F_1, \ldots, F_{l}\}$ to rank-$1$ elements $F' = \big\{w_i D \ket{v_i} \bra{v_i}\big\}_{i=1}^{l'}$, where $l \leq l'$. We can perform measurement $F$ by performing measurement with $F'$: when we measure a rank-$1$ element, we return the original POVM element the rank-$1$ element belongs to.
Using data processing inequality, we have $I(X: F \mbox{ on } U\rho_X U^\dagger | U) \leq I(X: \tilde{F} \mbox{ on } U\rho_X U^\dagger | U)$.
From now on, we can consider the POVM $\vec{F}$ to be $\big\{w_i D \ket{v_i} \bra{v_i}\big\}_{i=1}^l$.
We have 
$$\Tr\Big(\sum_i w_i D \ket{v_i} \bra{v_i}\Big) = \Tr(I) = D \implies \sum_i w_i = 1.$$
Let us define the probability vector $\vec{p} = \Tr(U\rho_1 U^\dagger \vec{F}),$ so $p_i = w_i D \bra{v_i} U\rho_1 U^\dagger \ket{v_i}.$
And the expression we hope to bound satisfies $I(X: F \mbox{ on } U\rho_X U^\dagger | U) = I(X, U : F \mbox{ on } U\rho_X U^\dagger) - I(U : F \mbox{ on } U\rho_X U^\dagger) \leq I(X, U : F \mbox{ on } U\rho_X U^\dagger)$ using the chain rule and the nonnegativity of mutual information.
We now bound
\begin{align*}
    I(X, U : F \mbox{ on } U\rho_X U^\dagger) = & H\Big( \sum_{X=1}^M \frac{1}{M} E_U [ \Tr(U \rho_X U^\dagger \vec{F}) ] \Big) - \sum_{X=1}^M \frac{1}{M} E_U \Big[ H\Big(\Tr(U \rho_X U^\dagger \vec{F}) \Big)\Big] \\
	= & H\Big( \Tr(E_U [U \rho_1 U^\dagger] \vec{F}) \Big) - E_U \Big[H\Big(\Tr(U \rho_1 U^\dagger \vec{F}) \Big)\Big] \\
	= & \sum_i - (\E_U p_i) \log(\E_U p_i) + \E_U [p_i \log p_i] \\
	\leq & \sum_i - (\E_U p_i) \log(\E_U p_i) + \E_U\Big[p_i \log(\E_U p_i) + p_i \frac{p_i - \E_U p_i}{\E_U p_i}\Big] \\
	= & \sum_i \frac{\E_U[p_i^2] - \E_U[p_i]^2}{\E_U[p_i]}.
\end{align*}
The second equality uses the fact that $E_U f(U \rho_X U^\dagger) = E_U f(U \rho_1 U^\dagger), \forall X$ which follows from the fact that $\forall X, \exists U_X, \rho_X = U_X \rho_1 U_X^\dagger$.
The inequality uses the fact that $\log(x)$ is concave, so $\log(x) \leq \log(y) + \frac{x-y}{y}$.
Using properties of Haar random unitary stated in Equation~\eqref{eq:2design} , we have
$$\E_U[p_i] = w_i, \,\, \E_U[p_i^2] = w_i^2 \frac{D}{(D+1)} \Bigg(1 + \frac{1}{D} + \alpha^2\Big(\frac{1}{r} - \frac{1}{D}\Big)\Bigg).$$
Therefore we have
$$\frac{\E_U[p_i^2] - \E_U[p_i]^2}{\E_U[p_i]} = w_i \alpha^2 \frac{D}{D+1} \Big(\frac{1}{r} - \frac{1}{D}\Big) \leq \frac{w_i \alpha^2}{r}.$$
Hence we have arrived at
$$I(X: F \mbox{ on } U\rho_X U^\dagger | U) \leq \sum_i \frac{\E_U[p_i^2] - \E_U[p_i]^2}{\E_U[p_i]} \leq \frac{\alpha^2}{r}.$$
\end{proof}

\section{Details regarding numerical experiments}
\label{sec:detail}

\paragraph*{Direct fidelity estimation and comparison with neural network tomography:}
To perform efficient classical simulation on larger system size (more than 100 qubits), we mainly focus on stabilizer states, because classical simulation of hundreds of qubits is possible using Gottesman-Knill theorem and the improved classical algorithm \cite{aaronson2004improved}.
We implement our protocol for constructing classical shadow of quantum states as follows.
\begin{enumerate}
    \item Sample a Clifford unitary $U$ from the Clifford group using the algorithm proposed in \cite{koenig2014efficiently}. This Clifford unitary is parameterized by $(\alpha, \beta, \gamma, \delta, r, s)$ which fully characterize its action on Pauli operators:
    \begin{equation*}
    UX_j U^\dagger = (-1)^{r_j} \Pi_{i=1}^n X_i^{\alpha_{ji}} Z_i^{\beta_{ji}} \quad \textrm{and} \quad 
    UZ_j U^\dagger = (-1)^{s_j} \Pi_{i=1}^n X_i^{\gamma_{ji}} Z_i^{\delta_{ji}} \quad \textrm{for all} \quad j.
    \end{equation*}
    \item
    Given a unitary $U$ parameterized by $(\alpha, \beta, \gamma, \delta, r, s)$, we can apply $U$ on any stabilizer state by changing the stabilizer generators and the destabilizers as defined in \cite{aaronson2004improved}.
    \item Measurement in the Z-basis can be done using the standard algorithm provided in \cite{aaronson2004improved}.
\end{enumerate}

The operational definition of mixed states allows us to extend this formalism. Mixed states  arise from sampling a pure state ensemble: $\rho=\sum_i p_i | \psi_i \rangle \! \langle \psi_i|$.

\begin{figure}[t]
    \centering
    \includegraphics[width=0.84\textwidth]{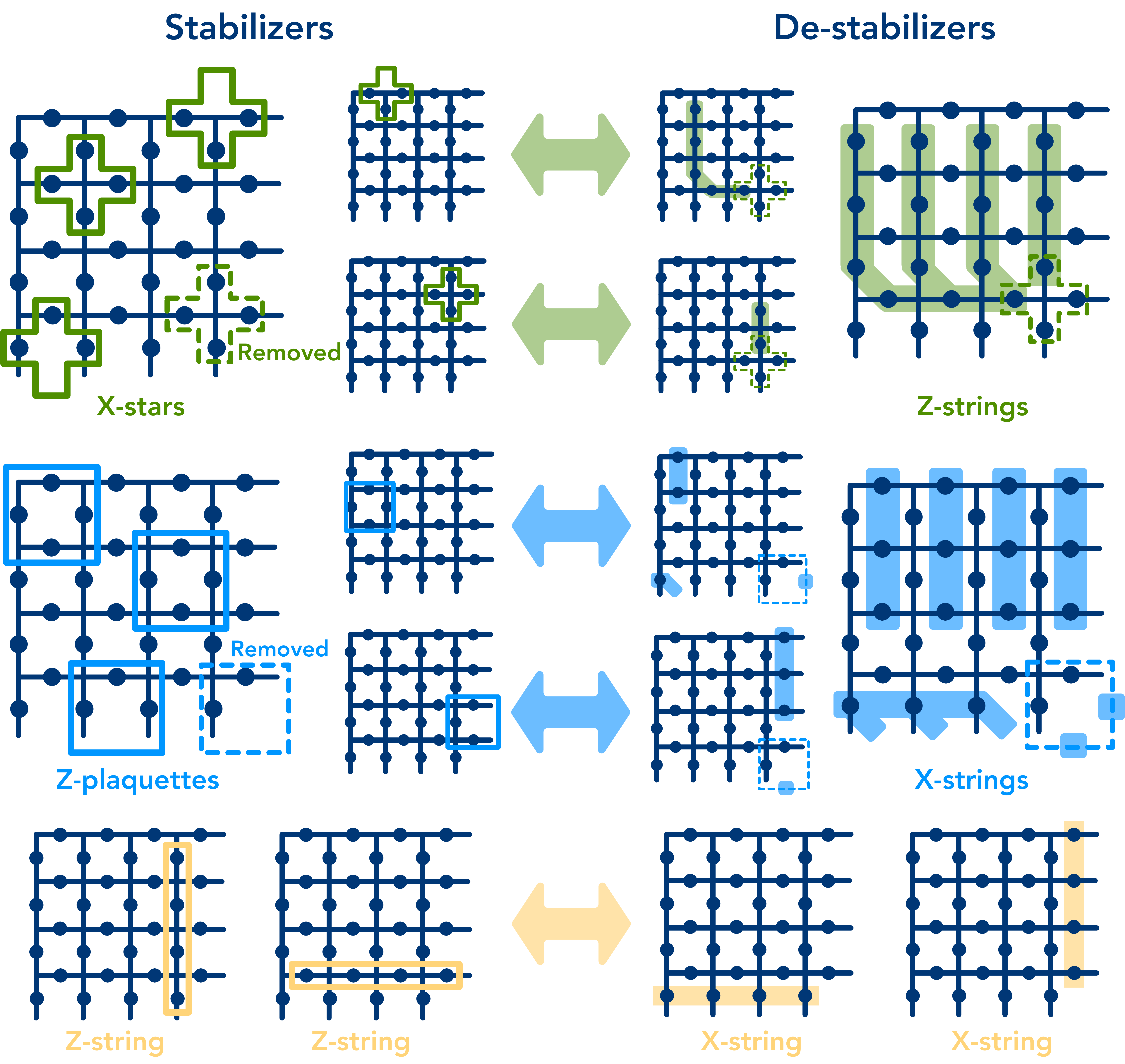}
    \caption{Stabilizers and de-stabilizers of toric code that encodes $\ket{00}$.}
    \label{fig:tableau}
\end{figure}

For neural network quantum state tomography, we use the open-source code provided by the authors \cite{carrasquilla2019reconstructing}.
The main part is to generate training data, i.e.\ simulating measurement outcomes.
For pure and noisy GHZ state, we use the tetrahedral POVM \cite{carrasquilla2019reconstructing}.
For toric code, we use the Psi2 POVM (which is a measurement in the Z-basis).
Note that measuring in the Z-basis is not a tomographically complete measurement, but we found machine learning models to perform better using Psi2. This is possibly because the pattern is much more obvious (closed-loop configurations) and the figure of merit used in NNQST is a classical fidelity.

A concrete algorithm for creating training data for pure GHZ states is included in the open-source implementation provided by the authors \cite{carrasquilla2019reconstructing}. It uses matrix product states to simulate quantum measurements efficiently.
The training data for noisy GHZ states is a slight modification of the existing code. With probability $1-p$, we sample a measurement outcome from the original
state $|\psi_{\mathrm{GHZ}}^+ \rangle = \tfrac{1}{\sqrt{2}} (|0 \rangle^{\otimes n} + |1 \rangle^{\otimes n})$.
And with probability $p$, we sample 
a measurement outcome from 
 $| \psi_{\mathrm{GHZ}}^-\rangle=\frac{1}{\sqrt{2}} (\ket{0}^{\otimes n} - \ket{1}^{\otimes n})$ (phase error).
Since the figure of merit is the fidelity with the pure GHZ state in both pure and noisy GHZ experiment, we reuse the implementation provided in \cite{carrasquilla2019reconstructing}.

Creating training data for toric code is somewhat more involved.
The goal is to sample a closed-loop configuration on a 2D torus uniformly at random.
This can again be done using classical simulations of stabilizer states \cite{aaronson2004improved}.
The main technical detail is to create a tableau that contains both the stabilizer and the de-stabilizer for the state in question. The rich structure of the toric code renders this task rather easy.
The stabilizers are the X-stars and the Z-plaquettes, with two Z-strings over the two loops of the torus.
The de-stabilizer of each stabilizer is a Pauli-string that anticommutes with the stabilizer, but commutes with other stabilizers and other de-stabilizers.
The full set of stabilizers and de-stabilizers for the toric code can be seen in Figure~\ref{fig:tableau}.

\vspace{1.5em}

\paragraph*{Potential difficulty in learning some quantum states:}
In the experiment, we have seen that neural network quantum state tomography based on deep generative models seems to have difficulty learning the toric code. We provide further analysis and construct a simple class of quantum states where efficient learning of the quantum state from the measurement data would violate a well-known computational hardness conjecture.
First of all, each measurement of the toric code in the Z-basis would be a random bit-string, where most bits are sampled uniformly at random from $\{0, 1\}$ and other bits are binary functions of the randomly sampled bits.
We now consider a simple class of quantum states that shares a similar property. Given $a \in \{0, 1\}^{n-1}$ and $f_a(x) = \sum_i a_i x_i$ (mod $2$), we define $\ket{a} = \frac{1}{\sqrt{2^{n-1}}}\sum_{x \in \{0, 1\}^{n-1}} \ket{x} \ket{f_a(x)}$, which can be created by preparing $\ket{+}$ on the first $n-1$ qubits and $\ket{0}$ on the $n$-th qubit followed by CNOT gates between $i$-th qubit and $n$-th qubit for every $a_i = 1$.
Measuring $\ket{a}$ in the Z-basis is the same as sampling the first $n-1$ bits $x$ uniformly at random, and the last bit is given deterministically as $f_a(x)$.
We then consider a noisy version of $\ket{a}$ and define
$$\rho_a = (1 - \eta) \ket{a}\bra{a} + \eta I / 2^n,$$
for some $0 < \eta < 1$.

One of the most widely used conjectures for building post-quantum cryptography is the hardness of learning with error (LWE) \cite{regev2009lattices}.
LWE considers the task of learning a linear $n$-ary function $f$ over a finite ring from noisy data samples $(x, f(x)+\eta)$ where $x$ is sampled uniformly at random and $\eta$ is some independent error.
An efficient learning algorithm for LWE will be able to break many post-quantum cryptographic protocals that are believed to be hard even for quantum computers.
The simplest example of LWE is called learning parity with error, where $f(x) = \sum_i a_i x_i$ (mod $2$) for $x \in \{0, 1\}^n$ and some unknown $a \in \{0, 1\}^n$. Learning parity with error is also conjectured to be computationally hard \cite{blum2003noise}.
Since learning $a$ from Z-basis measurements on $\rho_a$ is the same as the task of learning parity with error, it is unlikely there will be a neural network approach that can learn $\rho_a$ efficiently.

\vspace{1.5em}

\paragraph*{Witnesses for tri-partite entanglement: }
For this experiment, we consider 3-qubit states of the form 
\begin{equation*}
\ket{\psi} = U_A \otimes U_B \otimes U_C  | \psi_{\mathrm{GHZ}}^+ \rangle,
\end{equation*}
where $U_A, U_B, U_C$ are random single-qubit rotations.
We consider the simplest form of entanglement witness:
\begin{equation*}
\hat{O} = \alpha I - V_A \otimes V_B \otimes V_C | \psi_{\mathrm{GHZ}}^+ \rangle \! \langle \psi_{\mathrm{GHZ}}^+| V_A^\dagger \otimes V_B^\dagger \otimes V_C^\dagger.
\end{equation*}
The scalar parameter $\alpha$ and the single-qubit unitaries $ V_A, V_B, V_C$ parameterize the different candidates.
For witnessing genuine tripartite entanglement, we only need $\alpha$ to be $0.5$. To witness GHZ-type entanglement, we need $\alpha$ to be $0.75$.
Recall that we want $\bra{\psi}\hat{O}\ket{\psi} < 0$ to be a witness.
This means there will be more $V_A, V_B, V_C$ that can witness genuine tripartite entanglement, and less that can witness GHZ-type entanglement.

We generate random $V_A, V_B, V_C$ as candidates.
For direct measurements, we consider the number of total experiments required to estimate every $\hat{O}$ up to $0.1$-error.
Note that the number of required samples may vary from witness to witness. It depends on the variance associated with the estimation.

For parameter estimation with classical shadows, we first determine the total number $N$ of measurement repetitions required to $0.1$-accurately predict \emph{all} entanglement witnesses simultaneously, see Theorem~\ref{thm:main}. 
Subsequently, we test the performance by estimating all target witnesses.

Because the system size is small (three qubits), we simulate quantum experiments classically by storing and processing all $2^3=8$ amplitudes.

\end{appendix}

\bibliography{ref}
\bibliographystyle{abbrv}

\end{document}